
\documentclass[pdflatex,sn-mathphys-ay]{sn-jnl}

\usepackage{url}
\usepackage{indentfirst}
\usepackage{graphicx}%
\usepackage{multirow}%
\usepackage{amsmath,amssymb,amsfonts}%
\usepackage{amsthm}%
\usepackage{mathrsfs}%
\usepackage[title]{appendix}%
\usepackage{xcolor}%
\usepackage{textcomp}%
\usepackage{manyfoot}%
\usepackage{booktabs}%
\usepackage{algorithm}%
\usepackage{algorithmicx}%
\usepackage{algpseudocode}%
\usepackage{listings}%
\usepackage{lmodern}%
\usepackage{orcidlink}
\usepackage{makecell}
\usepackage{soul}

\allowdisplaybreaks

\theoremstyle{plain}
\newtheorem{theorem}{Theorem}
\newtheorem{proposition}[theorem]{Proposition}
\newtheorem{conjecture}[theorem]{Conjecture}
\newtheorem{corollary}{Corollary}[theorem]

\theoremstyle{definition}
\newtheorem{definition}{Definition}

\usepackage{color}
\definecolor{darkred}{rgb}{0.9,0,0}
\definecolor{darkblue}{rgb}{0,0,0.75}
\definecolor{darkgreen}{rgb}{0.1,0.6,0.3}
\definecolor{darkred}{rgb}{0.6,0.3,0.1}

\newcommand{\yan}[1]{\textcolor{darkred}{[#1]}}
\newcommand{\arm}[1]{\textcolor{blue}{[#1]}}

\raggedbottom








\newcommand{\be}{\begin{equation}}
\newcommand{\ee}{\end{equation}}
\def\1{{\bf 1}}

\DeclareMathOperator{\sign}{sign}

\begin{document}

\title{Expected and minimal values of a universal tree balance index}

\author[1,*]{Veselin Manojlovi\'c\,\orcidlink{0000-0003-0620-2431}}
\author[2,*]{Armaan Ahmed\,\orcidlink{0000-0001-9619-1611}}
\author[3]{Yannick Viossat\,\orcidlink{0000-0003-1388-0599}}
\author[1,**]{Robert Noble\,\orcidlink{0000-0002-8057-4252}}
\affil[1]{\orgdiv{Department of Mathematics}, \orgname{City St George's, University of London}, \orgaddress{\street{Northampton Square}, \city{London}, \postcode{EC1V 0HB}, \country{United Kingdom}}}
\affil[2]{Department of Applied Math \& Statistics, Johns Hopkins University, Baltimore, USA}
\affil[3]{Ceremade, CNRS, Université Paris-Dauphine, Université PSL, Paris, France}
\affil[*]{These authors contributed equally}
\affil[**]{robert.noble@city.ac.uk}
	
\abstract{Although the analysis of rooted tree shape has wide-ranging applications, notions of tree balance have developed independently in different domains. In computer science, a balanced tree is one that enables efficient updating and retrieval of data, whereas in biology tree balance quantifies bias in evolutionary processes. The lack of a precise connection between these concepts has stymied the development of universal indices and general results. We recently introduced a new tree balance index, $J^1$, that, unlike prior indices popular among biologists, permits meaningful comparison of trees with arbitrary degree distributions and node sizes. Here we explain how our new index generalizes a concept that underlies the definition of the weight-balanced tree, an important type of self-balancing binary search tree. Our index thus unifies the tree balance concepts of biology and computer science. We provide new analytical results to support applications of this universal index. First, we quantify the accuracy of approximations to the expected values of $J^1$ under two important null models: the Yule process and the uniform model. Second, we investigate minimal values of our index. These results help establish $J^1$ as a universal, cross-disciplinary index of tree balance that generalizes and supersedes prior approaches.}

\maketitle

\section{Introduction}

Broadly speaking, the balance of a rooted tree is the extent to which its terminal nodes (leaves) are evenly distributed among its branches. Indices that quantify tree balance have important applications in systematic biology \citep{heard1992patterns, mooers_inferring_1997, purvis2002phylogeny}, mathematical oncology \citep{Scott2018a, noble2022spatial, feder2024detecting}, and other fields \citep{
leventhal2012inferring, colijn2014phylogenetic, chindelevitch2021network, barzilai2023signatures}. More than twenty conventional tree balance or imbalance indices have been defined \citep{fischer2023tree} yet all have important shortcomings \citep{lemant_robust_2022, noble_new_2023}. Some conventional tree balance indices are defined only for bifurcating trees; others are meaningful only if no nodes have outdegree one. All require adjustments to compensate for correlation with tree size or degree distribution, which hampers the comparison of dissimilar trees. Lack of universality importantly limits practical applications.

We recently introduced a universal tree balance index, denoted $J^1$, that has no such flaws. $J^1$ is defined for any rooted tree topology and accounts for arbitrary node sizes (which may, for example, correspond to the population sizes of biological types in an evolutionary tree) \citep{lemant_robust_2022}. We proved that $J^1$ is robust, in the sense that it is insensitive to small changes in node sizes and to the removal of small nodes. We further showed that this index -- which derives from the Shannon entropy -- both unites and generalizes the two most popular prior approaches to quantifying tree balance in biology.

Our new index has diverse potential applications.
Applied to evolutionary trees, we and others have shown that $J^1$ outperforms conventional tree balance indices as a summary statistic for comparing trees inferred from empirical data to computational model predictions \citep{noble2022spatial, feder2024detecting}. 
$J^1$ also provides promising new ways to infer speciation mode from phylogenies \citep{freitas2024patch} and to predict the outcome of cancer immunotherapy from tumour evolutionary trees \citep{bozic2024neoantigen}.

Given any new tree shape index, an important task is to obtain its expected and extremal values under standard tree-generating processes, which can then be used as null-model reference points  \citep{fischer2023tree}. In \citet{lemant_robust_2022} we obtained analytical approximations to the expected values of $J^1$ under the Yule process and the uniform model, and we tested their accuracy numerically for trees with up to 128 leaves. We also showed that the caterpillar tree does not always minimize $J^1$ among trees with only zero-sized internal nodes, uniform leaf sizes, and no nodes of outdegree one. Generalizing these results has remained an open problem.
	
The contributions of this paper are threefold. First, we further establish $J^1$ as a universal index of tree balance by identifying fundamental connections to classical results in computer science, related to Huffman coding and self-balancing tree data structures. Second, we prove the accuracy of our expected value approximations for the Yule process and the uniform model. For the Yule process, we derive a new, closer approximation that rapidly converges to the true expectation in the large-tree limit. Finally, we investigate the minimal values of $J^1$ in an important special case, obtaining a counter-intuitive result in the large tree limit.

\section{Results}
\subsection{Preliminary definitions}\label{defsec}

\begin{definition}[Rooted tree]
	A \textbf{rooted tree} $T$ is a connected acyclic graph in which one node is designated the root. Parent-child and ancestor-descendant relationships in a rooted tree are assigned along paths directed away from the root.
\end{definition}

\begin{definition}[Node size and tree magnitude, \citet{lemant_robust_2022}]
	We assign to each node $i$ a non-negative \textbf{node size}, $w_i$. The \textbf{magnitude} of a tree $T$ is the sum of its node sizes: 
 $$
 S(T) = \sum_{i \in V(T)} w_i,
 $$
 where $V(T)$ is the set of all nodes.
 To avoid confusion we will not refer to the size of a tree, which is conventionally defined as its node count.
\end{definition}

\begin{definition}(Leafy tree, \citet{lemant_robust_2022}) A \textbf{leafy tree} is one with only zero-sized internal nodes.
\end{definition}

\begin{definition}(Node depth and tree height) We define the \textbf{depth} of a node as the number of edges in the shortest path from the root to that node, and the \textbf{height} of a tree as its maximum node depth.
\end{definition}

\begin{definition}[Sackin index, \citet{sackin_good_1972}, external path length and internal path length]\label{sackin_defn}
	The \textbf{Sackin index} of rooted tree $T$ is the sum of its leaf depths: 
	\begin{equation}\label{sackindef}
		I_S(T) = \sum_{l\in L(T)} \nu(l),
	\end{equation}
	where $L(T)$ is the set of all leaves (terminal nodes) of $T$, and $\nu(l)$ is the depth of leaf $l$. In computer science this is known as the \textbf{external path length}, whereas the \textbf{internal path length} is the sum of all internal node depths \citep{knuth1997art_vol1} (p. 400).
\end{definition}

\begin{definition}[Generalized Sackin index, \citet{lemant_robust_2022} or weighted path length] The Sackin index can be generalized to account for arbitrary node sizes:
	\begin{equation}\label{gensackindef}
		I_{S,\text{gen}}(T) = \sum_{i \in V(T)} w_i \nu(i) = \sum_{i\in \tilde{V}(T)} S_i^*,
	\end{equation}
where $\tilde{V}(T)$ the set of all internal nodes (non-leaves) whose descendants are not all of zero size, and $S_i^*$ is the magnitude of the subtree rooted at node $i$, excluding $i$. If $T$ is a leafy tree in which all leaves have unit size then $I_{S,\text{gen}}(T) = I_S(T)$. In computer science, $I_{S,\text{gen}}(T)$ is known as the \textbf{weighted path length} \citep{knuth1997art_vol1} (p. 402), denoted $|T|$.
\end{definition}

\begin{definition}[Binary, bifurcating, full $m$-ary, and linear trees]
	A \textbf{binary tree} is a rooted tree in which no node has more than two children. A \textbf{full $m$-ary tree} is a rooted tree in which each internal node has exactly $m$ children. The terms \textbf{bifurcating} and full binary are also used when $m = 2$, in which case every internal node is associated with a left and a right subtree, of which its two children are the roots. If $m=1$ then the tree is \textbf{linear}.
\end{definition}

\begin{definition}[Cherry]
	A tree consisting of only a root and two leaves is a \textbf{cherry}.
\end{definition}

\begin{definition}[Caterpillar tree]
	A \textbf{caterpillar tree} is a bifurcating tree in which every internal node except one has exactly one child leaf.
\end{definition}

 
\subsection{The universal tree balance index $J^1$}

In \citet{lemant_robust_2022} we introduced a general class of tree shape indices of the form
\begin{equation}\label{Jdef}
	J(T) = \frac{1}{\sum_{k \in \tilde{V}(T)} g_k} \sum_{i\in\tilde{V}(T)} g_i W_i,
\end{equation}
where $g_k$ is a node importance factor and $W_i$ is a node balance score. We showed that if the importance factors and balance scores satisfy certain conditions then $J$ can, in a precise sense, be considered a robust, universal tree balance index.
Within this general class, we further defined a particular index $J^1$ based on a node balance score function $W^1$.
For every child $j$ of an internal node $i$, the node balance score $W_{ij}^1$ is defined in terms of the normalized Shannon entropy:
\begin{equation}\label{Wij1}
	W_{ij}^1 = 
	\begin{cases}
		-\frac{S_j}{S_i^*}\log_{d^+(i)}\frac{S_j}{S_i^*}, & \text{ for } d^+(i) > 1 \\
		0,                                                & \text{ otherwise,}      
	\end{cases}
\end{equation}
where $S_i$ is the magnitude of the subtree rooted at node $i$, including $i$, and $d^+(i)$ is the outdegree (number of children) of $i$. The tree balance index $J^1$ is then defined as a weighted mean of node balance scores by setting $g_i = S_i^*$ and $W_i = \sum_{j\in C(i)} W_{ij}^1$:
\begin{equation}\label{J1def}
	J^1(T) = \frac{1}{I_{S,\text{gen}}(T)} \sum_{i\in\tilde{V}(T)} S_i^{*}\sum_{j\in C(i)}W_{ij}^{1},
\end{equation}
where $C(i)$ is the set of children of node $i$. Note that, for all $i \in C(i)$, we have $0 \le W_i \le 1$ and so $J^1$, being a weighted average of $W_i$ values, has extrema zero (minimally balanced) and one (maximally balanced).
		
In \citet{lemant_robust_2022} we proved that, in the case of full $m$-ary trees, $J^1$ is identical to the normalized reciprocal of the generalized Sackin index:
\begin{proposition}[The leafy tree identity \citep{lemant_robust_2022}] \label{proposition6}
	Let $T$ be a leafy tree with $d^+(i) = m > 1$ for all internal nodes $i$. Then
	\begin{equation}
		J^1(T) = \frac{H_m(T)S(T)}{I_{S,\text{gen}}(T)}, \label{prop6}
	\end{equation}
	where $H_m(T) = -\sum_{i \in L(T)} \frac{f(i)}{S(T)} \log_m \frac{f(i)}{S(T)}$ is the Shannon entropy (base $m$) of the leaf sizes $f(i)$ divided by their sum $S(T)$. If additionally the leaves are equally sized then $J^1(T) = n \log_m n / I_S(T)$.
\end{proposition}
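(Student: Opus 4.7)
The plan is to unpack the definition of $J^1(T)$ from (\ref{J1def}) and (\ref{Wij1}). Because every internal node has outdegree $m > 1$, the balance score simplifies uniformly to $W_{ij}^1 = -(S_j/S_i^*)\log_m(S_j/S_i^*)$. Multiplying by $S_i^*$, summing over $j \in C(i)$, and using the identity $\sum_{j \in C(i)} S_j = S_i^*$ (which holds because $i$ is internal and therefore of zero size in a leafy tree) produces the local identity
\begin{equation*}
S_i^* \sum_{j \in C(i)} W_{ij}^1 \;=\; S_i^* \log_m S_i^* \;-\; \sum_{j \in C(i)} S_j \log_m S_j.
\end{equation*}

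Next I would sum this identity over all $i \in \tilde V(T)$ and exploit telescoping. Since every non-root node is a child of exactly one internal node, the double-sum of the second term reduces to $\sum_{j \ne \text{root}} S_j \log_m S_j$. Crucially, because $T$ is leafy, every internal node $j$ has $w_j = 0$, so $S_j = S_j^*$, and the contributions of non-root internal nodes then appear with opposite signs on the two sides and cancel exactly. What survives is $S(T) \log_m S(T)$ from the root (which is internal, hence $S_{\text{root}}^* = S(T)$) and $-\sum_{l \in L(T)} f(l) \log_m f(l)$ from the leaves. Distributing the logarithm in the definition of $H_m(T)$ and using $\sum_{l\in L(T)} f(l) = S(T)$ shows that this surviving quantity equals $H_m(T) S(T)$ exactly; dividing through by $I_{S,\text{gen}}(T)$ yields (\ref{prop6}).

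For the equally-sized leaf case, I would substitute $f(l) = w$ for every leaf, giving $H_m(T) = -n \cdot (1/n)\log_m(1/n) = \log_m n$, $S(T) = nw$, and $I_{S,\text{gen}}(T) = w \sum_{l \in L(T)} \nu(l) = w\,I_S(T)$. Substitution into (\ref{prop6}) cancels $w$ and yields $J^1(T) = n \log_m n / I_S(T)$. Rather than a genuine obstacle, the only point demanding care is the telescoping bookkeeping: one must align the sum indexed over internal nodes $i$ with the sum indexed over their children $j$, and explicitly invoke the leafy property $S_j = S_j^*$ on every internal $j$ so that only the root contribution and the leaf contributions remain.
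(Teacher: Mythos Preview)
Your proof is correct. The telescoping argument is clean: the local identity $S_i^*\sum_{j\in C(i)}W_{ij}^1 = S_i^*\log_m S_i^* - \sum_{j\in C(i)} S_j\log_m S_j$ is exactly right, and the cancellation over internal non-root nodes works because the leafy hypothesis gives $S_j = S_j^*$ for every internal $j$. The only bookkeeping detail you might make explicit is that $\tilde V(T)$ coincides with the full set of internal nodes here (or, equivalently, that any internal node with $S_i^* = 0$ contributes zero on both sides and so can be harmlessly included), but this does not affect the validity of the argument.

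Note that this paper does not itself supply a proof of the proposition; it is quoted from \citet{lemant_robust_2022} and used as a standing tool. There is therefore no in-paper proof to compare against, but your argument is the standard one and is what one would expect the original proof to look like.
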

Moreover, we proved that this property is unique to $J^1$ among all indices defined as weighted means of node balance scores. Following \citet{noble_new_2023}, we will refer to Proposition~\ref{proposition6} as the leafy tree identity.
 
\subsection{$J^1$ unites and generalizes prior notions of tree balance}

In computer science, tree balance is effectively a binary property: a tree is considered balanced if its height is sufficiently small, given its leaf count. A precise definition for binary trees is that the height of the left subtree of every node must be no more than one branch longer or shorter than the height of its right subtree \citep{knuth1997art_vol3} (p. 459). In biology, where comparisons between trees are more relevant, it is conventional instead to use a normalized form of the Sackin index or another index to assign balance values on a continuum \citep{Shao1990, fischer2023tree}. Subsequent to defining $J^1$ in \citet{lemant_robust_2022}, we have discovered that our index uniquely connects these two historically separate notions of tree balance.

Key to establishing this connection is the concept of a self-balancing binary search tree. Binary search trees are data structures in which each node is associated with an item of data and information is retrieved by following a path down the tree from the root to the relevant node. The expected time taken to retrieve (or add or remove) data from a binary search tree is thus proportional to the internal path length, and the worst-case time taken is proportional to the tree height. A self-balancing binary search tree is one that automatically adjusts itself to maintain a relatively small height as nodes are added and deleted. \citet{nievergelt_binary_1972} introduced one of the most popular types of self-balancing binary search tree, originally called a tree of bounded balance but now more commonly known as a weight-balanced tree \citep{knuth1997art_vol3} (p. 476). Their crucial insight was that the height and the internal and external path lengths can be kept small by ensuring that, for all internal nodes, the ratio of the left subtree leaf count and the right subtree leaf count stays sufficiently close to unity.
    
The principles underlying the weight-balanced tree concept were presented in \citet{wong_upper_1973}, which derived upper bounds for the internal and external path lengths of binary trees. In proving these results, \citet{wong_upper_1973} defined what they called the entropy of a node. They consequently defined the average entropy of a tree and stated an identity relating this average entropy to the external path length.
The entropy of a node is identical to our node balance score in the special case of full $m$-ary leafy trees with uniform leaf sizes. Average entropy is (ignoring an erroneous division sign in their definition) equivalent to $J^1$ restricted to the same special case. Moreover, the identity of \citet{wong_upper_1973} is equivalent to the special case of the leafy tree identity (Proposition~\ref{proposition6}). A seldom cited paper by \citet{nievergelt_bounds_1972} implicitly extended these concepts to account for arbitrary node sizes. The universal tree balance index that we call $J^1$ was thus defined -- and then effectively forgotten -- fifty years before we independently rediscovered it.
	
\subsection{$J^1$ is maximized by Huffman coding}

Huffman coding provides another way to understand the equivalence between minimizing binary tree path length and maximizing $J^1$.
\begin{definition}[Huffman coding, \citet{huffman_method_1952}]
	\textbf{Huffman coding} is a method for constructing a leafy binary tree with minimal weighted path length, given a set of leaves with arbitrary sizes $\alpha_1, \dots, \alpha_n > 0$. We begin with the set $B = \{ t_1, \dots, t_n \}$, where each $t_i$ is a tree comprising exactly one node of size $\alpha_i$. Without loss of generality, let $t_1$ and $t_2$ be the two smallest magnitude trees in $B$. We then combine $t_1$ and $t_2$ by attaching them to a root of size zero to obtain $B = \{ t_{1,2}, \dots, t_n \}$. We repeat this process until $B$ comprises a single tree $t_{1,2,\dots,n}$, called the \textbf{Huffman tree}.
\end{definition}
\begin{proposition}
	The Huffman tree maximizes $J^1$ on bifurcating leafy trees for a given set of leaf sizes.
\end{proposition}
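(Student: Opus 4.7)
The plan is to reduce the maximization of $J^1$ to the minimization of the weighted path length and then invoke the classical optimality of Huffman coding. A bifurcating tree is a full binary tree, so every internal node has outdegree $m = 2$, and Proposition~\ref{proposition6} applies directly: for any bifurcating leafy tree $T$ with leaf sizes $\alpha_1, \dots, \alpha_n$,
\begin{equation*}
    J^1(T) = \frac{H_2(T)\, S(T)}{I_{S,\text{gen}}(T)}.
\end{equation*}

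The first step is to observe that both factors in the numerator are invariants of the multiset of leaf sizes. Indeed, $S(T) = \sum_i \alpha_i$ is fixed, and $H_2(T) = -\sum_i (\alpha_i / S(T)) \log_2 (\alpha_i / S(T))$ depends only on $\{\alpha_1, \dots, \alpha_n\}$, not on the tree topology. Consequently, across the class of bifurcating leafy trees with the prescribed leaf sizes, $J^1(T)$ is inversely proportional to the weighted path length $I_{S,\text{gen}}(T) = |T|$.

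The second step is to recall the classical theorem of Huffman \citep{huffman_method_1952}: the Huffman tree achieves the minimum weighted path length $|T|$ among all bifurcating leafy trees with leaf sizes $\alpha_1, \dots, \alpha_n$. Combining this with the monotone relationship established above, the Huffman tree maximizes $J^1$ on this class.

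There is no substantive obstacle; the result is essentially a corollary of the leafy tree identity together with Huffman's optimality theorem. The only point deserving care is to confirm that ``bifurcating'' in the statement means full binary (so Proposition~\ref{proposition6} applies with $m = 2$) and that the normalization in $H_2$ is indeed topology-independent, both of which are immediate from the definitions in Section~\ref{defsec}.
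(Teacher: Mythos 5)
Your proof is correct and takes essentially the same approach as the paper: the paper's one-line proof also invokes the leafy tree identity to reduce maximizing $J^1$ to minimizing the weighted path length, which Huffman coding achieves by construction. Your version merely spells out the (correct) observation that the numerator $H_2(T)\,S(T)$ depends only on the leaf sizes and not on the topology.
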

\begin{proof}
	By the leafy tree identity, the Huffman method maximizes $J^1$ as it minimizes the weighted path length.
\end{proof}
	
\subsection{Expected values of $J^1$ under standard null models}\label{expsection}
For applications in evolutionary biology, it is important to know the expected values of a balance index under simple null models. These expected values can then be compared with values obtained for trees inferred from empirical data, to test hypotheses about the nature of evolution \citep{heard1992patterns, mooers_inferring_1997, mckenzie_distributions_2000, aldous_stochastic_2001, steel_properties_2001}. The simplest and most widely studied null models are the uniform model and the Yule model (or pure birth process), both of which generate bifurcating unlabelled trees. 

\begin{figure}[h]
	\centering
	\includegraphics[width=0.8\textwidth]{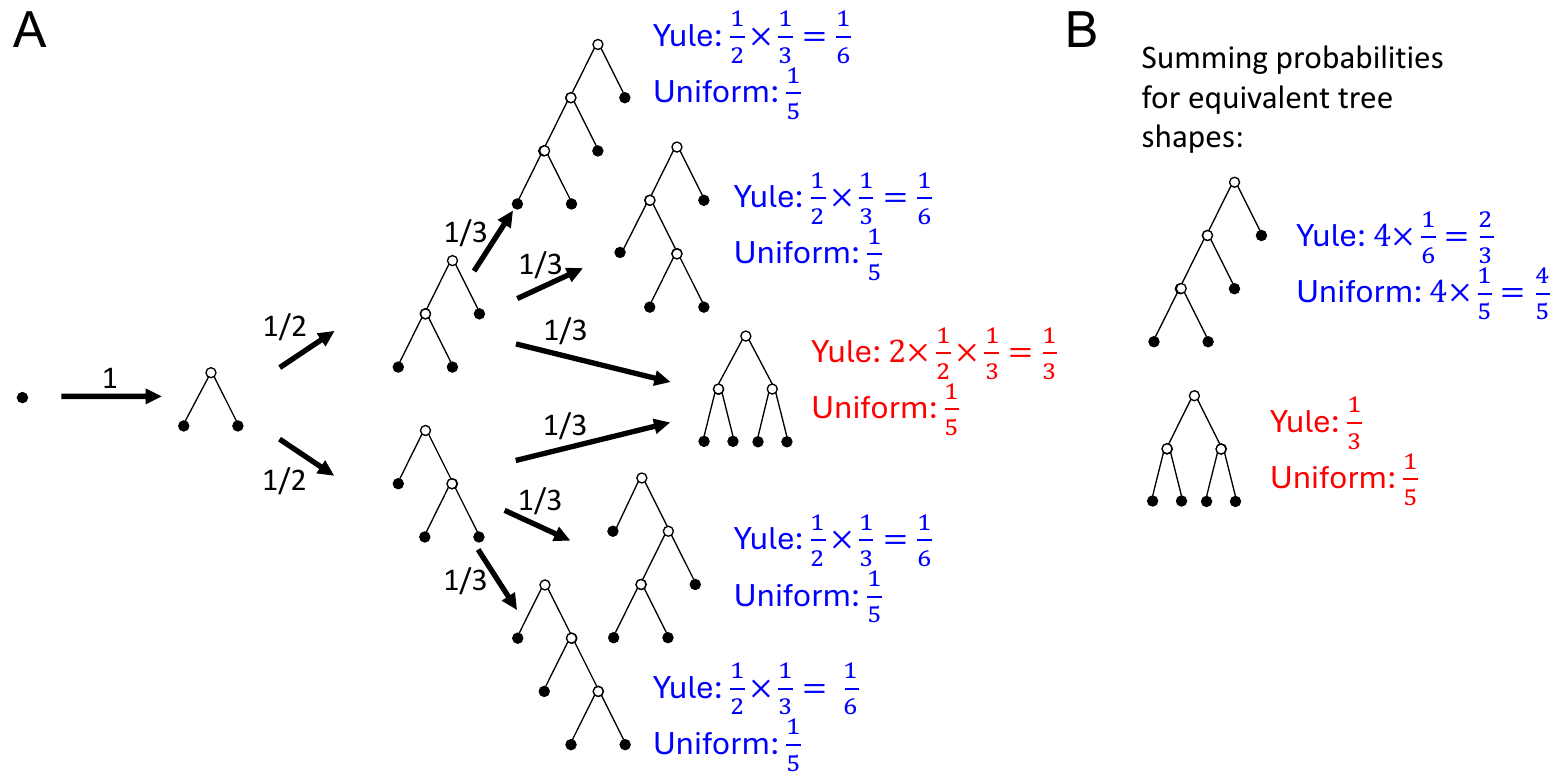}
	\caption{\textbf{A:} Tree generation under the Yule model. Each four-leaf tree is labelled with the probabilities it is assigned by the Yule and uniform models. \textbf{B:} Tree shape probabilities are obtained by summing over trees that can be transformed into one another by rotating branches, such as the four asymmetric trees in panel A (blue labels).}
	\label{yule-unif-figure}
\end{figure}
    
\begin{definition}[Yule model, \citet{yule_iimathematical_1925}]\label{yuledef}
	The \textbf{Yule model} is associated with the following process: Starting with a cherry, repeatedly replace one leaf, chosen uniformly at random, with a cherry, so that the number of leaves increases by one at each step (Figure~\ref{yule-unif-figure}A). For a given number of leaves, each tree is assigned a probability proportional to the number of ways it can be generated by this process.
\end{definition}
\begin{definition}[Uniform model, \citet{rosen_vicariant_1978}]\label{unifdef}
	Under the \textbf{uniform model}, every bifurcating tree on $n$ leaves is assigned the probability $n\binom{2n-2}{n-1}^{-1}$, the inverse of the number of distinct trees on $n$ leaves.
\end{definition}
Tree balance indices assign equal values to trees that can be transformed into one another by rotating branches. We can therefore sum probabilities over all trees with equivalent shapes (Figure~\ref{yule-unif-figure}B). Since the Yule and uniform models do not assign node sizes, we will consider only leafy trees with uniform leaf sizes. In systematic biology, such trees can be interpreted as cladograms in which leaves represent extant types and internal nodes represent extinct common ancestors \citep{podani2013tree}.

The leafy tree identity implies that the expected value of $J^1$ for full $m$-ary leafy trees with $n$ equally sized leaves is
\begin{equation}
	\mathbb{E}(J^1) = \mathbb{E} \left( \frac{n\log_2 n}{I_S} \right) 
	= \frac{n\log_2 n} {\mathbb{H} \left( I_S \right)},
\end{equation}
where $\mathbb{H}(I_S) = 1 / \mathbb{E}(1 / I_S)$ is the harmonic mean of the Sackin index. Obtaining the expected value of $J^1$ for such trees under any model is therefore equivalent to obtaining the harmonic mean of the Sackin index under the same model. 
Although we have yet to find a closed-form expression for the harmonic mean of $I_S$ under either of the standard null models, we have obtained exact expected values of $J^1$ for $n\leq 11$ (Yule model) and $n\leq 10$ (uniform model) by iteratively generating all possible $n$-leaf trees (Tables~\ref{table_values} and~\ref{table_values_unif}; black crosses in Figure~\ref{jensenfig}).

\begin{figure}[h]
	\centering
	\includegraphics[width=\textwidth]{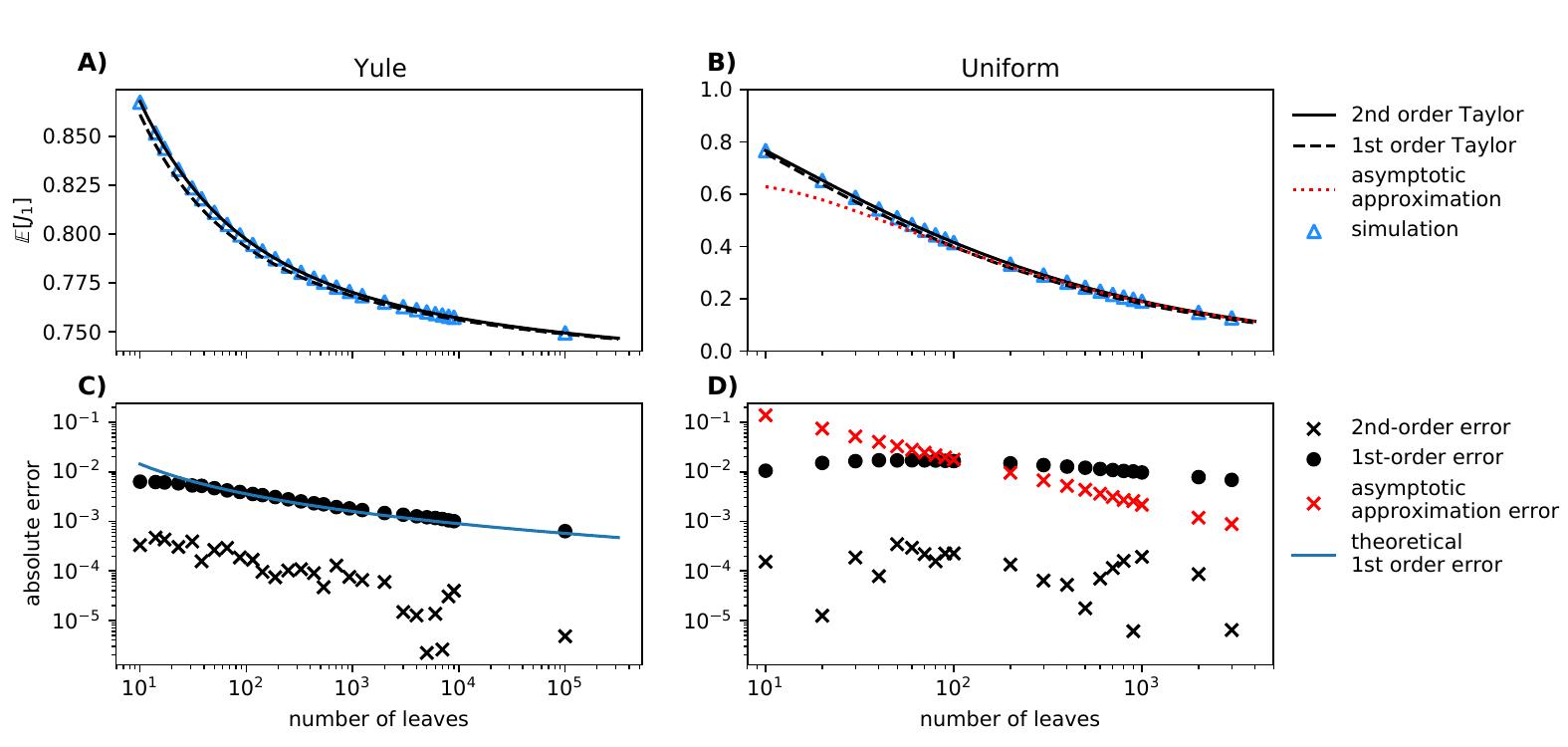}
    \caption{$\mathbb{E}[J^1]$ approximations along with associated errors. \textbf{Top row}: Blue triangles show the sample mean of $J^1$ under the Yule \textbf{(A)} and uniform \textbf{(B)} tree generation processes . 1st and 2nd order Taylor approximations are shown for both processes. The 1st order approximation also corresponds to a harmonic mean-expectation approximation. In the uniform model, an approximation based on the asymptotic distribution of $X_n=\frac{I_S}{n^{3/2}}$ is shown in red. At each of the designated leaf counts, we sample $10^6$ trees under the Yule or uniform process. We use the straightforward recursion equation to sample Yule trees (Eq. \ref{eq:rec}), and we employ the random bracket sequence method described in \citet{atkinson1992binary} to sample uniform binary trees. \textbf{Bottom row}: The errors of each approximation type (Yule \textbf{(C)} and uniform \textbf{(D)}) are shown with respect to the tree size. The blue line indicates the theoretically expected error. For further details of the approximations see Appendix~\ref{app:approx_j1}.}
	\label{jensenfig} 
\end{figure}
	
We can also approximate the expected value of $J^1$ by using the arithmetic mean to approximate the harmonic mean: $\mathbb{H}(I_S) \approx \mathbb{E}(I_S)$.
Under the Yule and uniform models, the expected values of the Sackin index for trees on $n$ leaves are respectively
\begin{equation} \label{yule_exp_sackin}
	\mathbb{E}_Y(I_S) = 2n\sum_{i=2}^n \frac{1}{i} = 2n (H_n - 1), \quad \mathbb{E}_U(I_S) = n\left( \frac{(2n-2)!!}{(2n-3)!!}-1 \right),
\end{equation}
where $H_n$ is the $n$th harmonic number and $k!!$ is the double factorial $k!! = k(k-2)(k-4)\dots$ \citep{kirkpatrick_searching_1993, mir_new_2013}.
Hence 
\begin{equation} \label{approxEJ1}
	\mathbb{E}_Y(J^1) \approx \frac{\log_2 n}{2(H_n - 1)}, \quad \mathbb{E}_U(J^1) \approx \frac{(2n-3)!! \log_2 n}{(2n-2)!!-(2n-3)!!}.
\end{equation}
The error in each approximation is the Jensen gap
\begin{equation} 
\mathcal{J}(n) = \mathbb E(J^1)-\frac{n \log_2 n}{\mathbb E(I_S)}.
\end{equation} 
In \citet{lemant_robust_2022} we obtained numerical results, based on samples of randomly generated trees, suggesting that $\mathcal{J}(n)$ is relatively small for $n \le 128$.

A result of \citet{liao_sharpening_2017} leads to the following more rigorous and precise new results:
\begin{proposition}\label{jensen_prop}
    For bifurcating leafy trees with $n$ equally sized leaves:
	\begin{enumerate}
		\item For the Yule model, $\mathcal{J}(n) < 0.008$ for all $n$, and $\mathcal{J}(n) \to 0$ as $n\to\infty${, which implies that, asymptotically, $J^1\approx \frac{1}{2\ln 2} \approx 0.72$ (in the sense that the distribution converges to this value as $n\to\infty$).}
		\item For the uniform model, $\mathcal{J}(n) < 4/(3 \pi e^2) \approx 0.057$ for all $n$.
	\end{enumerate}
\end{proposition}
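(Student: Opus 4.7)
The plan is to use the leafy tree identity (Proposition~\ref{proposition6}) to express $\mathcal{J}(n)$ as a Jensen gap, then bound this gap via the \citet{liao_sharpening_2017} sharpening of Jensen's inequality. Since for bifurcating leafy trees with $n$ equally sized leaves we have $J^1 = n\log_2 n/I_S$,
\begin{equation}
\mathcal{J}(n) \;=\; n\log_2 n\,\bigl[\mathbb{E}(1/I_S)-1/\mathbb{E}(I_S)\bigr],
\end{equation}
which is, up to the prefactor, the Jensen gap of the convex function $x\mapsto 1/x$ applied to the random variable $I_S$. Applied to this specific $f$, the Liao--Berg machinery amounts to the identity $\mathbb{E}(1/I_S)-1/\mu = \mathbb{E}[(I_S-\mu)^2/I_S]/\mu^2$ followed by a worst-case replacement of $1/I_S$ by $1/I_S^{\min}$ inside the expectation, where $\mu=\mathbb{E}(I_S)$.

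The deterministic ingredient I would use to control $I_S^{\min}$ is Kraft's inequality $\sum_{\ell\in L(T)}2^{-\nu(\ell)}\leq 1$, which combined with convexity of $x\mapsto 2^{-x}$ yields $I_S\geq n\log_2 n$ for every bifurcating tree on $n$ equally sized leaves. Plugging this into the Liao--Berg bound produces an upper bound on $\mathcal{J}(n)$ controlled by $\mathrm{Var}(I_S)/\mathbb{E}(I_S)^2$, possibly refined with an additional factor coming from the precise location of the support of $I_S$; the remaining tasks are then to substitute the model-specific first and second moments of $I_S$ and to extract the numerical constants.

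For the Yule model I would substitute $\mathbb{E}_Y(I_S)=2n(H_n-1)$ from \eqref{yule_exp_sackin} together with a closed-form expression for $\mathrm{Var}_Y(I_S)=\Theta(n^2)$ (e.g.\ from \citet{mir_new_2013}), which yields a bound of order $1/\log^2 n$, so $\mathcal{J}(n)\to 0$. The uniform-in-$n$ bound $\mathcal{J}(n)<0.008$ then comes from combining the exact values already tabulated for $n\leq 11$ with a monotone tail estimate on the explicit closed-form expression at larger $n$. The distributional convergence $J^1\to 1/(2\ln 2)$ is then immediate: the same variance estimate gives $I_S/\mathbb{E}_Y(I_S)\to 1$ in probability via Chebyshev, while $n\log_2 n/\mathbb{E}_Y(I_S)=\log_2 n/[2(H_n-1)]\to 1/(2\ln 2)$ deterministically, and the product converges by the continuous mapping theorem.

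For the uniform model I would substitute $\mathbb{E}_U(I_S)\sim\sqrt{\pi}\,n^{3/2}$ from \eqref{yule_exp_sackin} and $\mathrm{Var}_U(I_S)=\Theta(n^3)$ (Aldous--Flajolet--Janson), which makes $\mathrm{Var}_U/\mathbb{E}_U^2$ tend to a finite positive limit rather than vanish. This is where I expect the main obstacle: unlike the Yule case, the uniform-in-$n$ bound cannot be reduced to verifying large-$n$ asymptotics against a finite tabulation, so extracting the specific constant $4/(3\pi e^2)$ will probably require either a sharper Liao--Berg variant that exploits the concentration of $I_S/n^{3/2}$ toward its Airy-area limit distribution (plausibly the origin of the $e^{-2}$ factor, via a Chernoff-type truncation of the support used in the worst-case replacement above) or direct global maximisation over $n$ of the explicit closed-form bound, verified against the exact values tabulated for $n\leq 10$.
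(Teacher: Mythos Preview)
Your core approach matches the paper's: apply the Liao--Berg sharpening of Jensen to $f(x)=n\log_2 n/x$, use $I_S^{\min}=n\log_2 n$ to evaluate the supremum of $h(x;\mu)=n\log_2 n/(x\mu^2)$, and obtain the clean upper bound $\mathcal{J}(n)\le \mathrm{Var}(I_S)/\mathbb{E}(I_S)^2$. Your identity $\mathbb{E}(1/I_S)-1/\mu=\mathbb{E}[(I_S-\mu)^2/I_S]/\mu^2$ is just the explicit form of this for $f(x)=1/x$, so there is no real divergence in method.

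Where you overcomplicate is in extracting the numerical constants. For the Yule case you propose combining the exact $\mathcal{J}(n)$ values for $n\le 11$ with a tail estimate. The paper does something simpler and cleaner: it never touches the exact Jensen gaps at all, but directly maximises the closed-form upper bound $\mathrm{Var}_Y(I_S)/\mathbb{E}_Y(I_S)^2$ over $n$ (both numerator and denominator have explicit expressions in terms of harmonic numbers), finding the maximum $\approx 0.00790$ at $n=13$. No tabulation is needed.

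For the uniform case your worries about needing Airy concentration or Chernoff truncation are misplaced. The paper again just analyses the closed-form bound $\mathrm{Var}_U(I_S)/\mathbb{E}_U(I_S)^2$ and observes that it is \emph{increasing} in $n$ toward the finite limit $\tfrac{10}{3\pi}-1\approx 0.061$; that limit is therefore itself a uniform-in-$n$ bound on $\mathcal{J}(n)$. No refinement of the Liao--Berg machinery is invoked. Note, incidentally, that the constant the paper's proof actually delivers is $\tfrac{10}{3\pi}-1\approx 0.061$, which is slightly weaker than the $4/(3\pi e^2)\approx 0.057$ stated in the proposition; your speculation about where the $e^{-2}$ might come from is chasing a constant the paper's own argument does not in fact produce.
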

\begin{proof}
     See Appendix~\ref{jensen_proof}.
\end{proof}

{The first part of Proposition~\ref{jensen_prop} can be generalized to the Yule process for full $m$-ary trees (see Appendix~\ref{Yule_m-ary}).}

The errors in our first-order approximation (black dots in Figure~\ref{jensenfig}C and~\ref{jensenfig}D) lead us to conjecture further that $\mathcal{J}(n)$ for the uniform model is never more than 0.02 and approaches zero as $n\to\infty$. The gap is larger for the uniform model than for the Yule model because $I_S$ has greater variance (Figure~\ref{var_fig}).

The approximation $\mathbb{E}[\frac{1}{I_S}]\approx \frac{1}{\mathbb{E}[I_S]}$ can also be interpreted as the moment of a first-order Taylor approximation of the function $f(x)=\frac{1}{x}$ composed with the random Sackin index. Using this approach, we obtain more precise approximate asymptotic error expressions in the Yule case and find that $\mathcal{J}(n)$ decays at an asymptotic approximate rate of $(\ln n)^{-2}$. We can further perform a second-order Taylor expansion:
\begin{equation}\label{eq:2nd_order}
    \mathbb{E}[J^1]\approx \frac{n\log_2 n}{\mathbb{E}[I_S]}+\frac{n\log_2 n}{\mathbb{E}[I_S]^3}\mathbb{V}[I_S],
\end{equation}
where $\mathbb{V}$ denotes the variance. This approximation decays at a faster rate $(\ln n)^{-3}$. In the uniform model in Figure~\ref{jensenfig}, we empirically observe that equation \ref{eq:2nd_order} is an accurate approximation of $\mathbb{E}[J^1]$. An asymptotic approximation based on the asymptotic distribution of $X_n=\frac{I_S}{n^{3/2}}$ in the uniform model is conjectured to decay at polynomial rate (red crosses in Figure \ref{jensenfig}d). All these results are derived in Appendix~\ref{app:approx_j1}.

\subsection{Least balanced leafy trees with equally sized leaves}

We now turn from expected to minimal values. We previously proved that $J^1$ attains its minimal value 0 if and only if the tree is linear, and its maximal value 1 if and only if each internal node splits its descendants into at least two subtrees of equal magnitude \citep{lemant_robust_2022}. Because $J^1$ accounts for node sizes, any non-linear tree, regardless of topology, can be made arbitrarily unbalanced ($J^1 \to 0$) or arbitrarily balanced ($J^1 = 1$) by adjusting its relative node sizes (Figure~\ref{balcat}A). It is therefore trivial to find the most and least balanced trees across all possible node sizes. The most important open questions concern the extremal values of $J^1$ when the node degree and node size distributions are somehow constrained.

The simplest interesting case is that of leafy trees with equally sized leaves. Among bifurcating trees with a given number of leaves, the caterpillar tree maximizes Sackin's index \citep{fischer_extremal_2021}. Consequently, the leafy tree identity implies that, among all bifurcating leafy trees with a given number of equally sized leaves, the caterpillar tree minimizes $J^1$. However, we previously presented a counterexample showing that the caterpillar tree does not always minimize $J^1$ when larger outdegrees are permitted  \citep{lemant_robust_2022}. Specifically, we showed that, among all leafy trees with six equally sized leaves and no nodes of outdegree 1, the least balanced tree according to $J^1$ is not a caterpillar but belongs to a superset of caterpillar trees that we shall call broom trees.

\begin{figure}[h]
	\centering
	\includegraphics[width=0.9\textwidth]{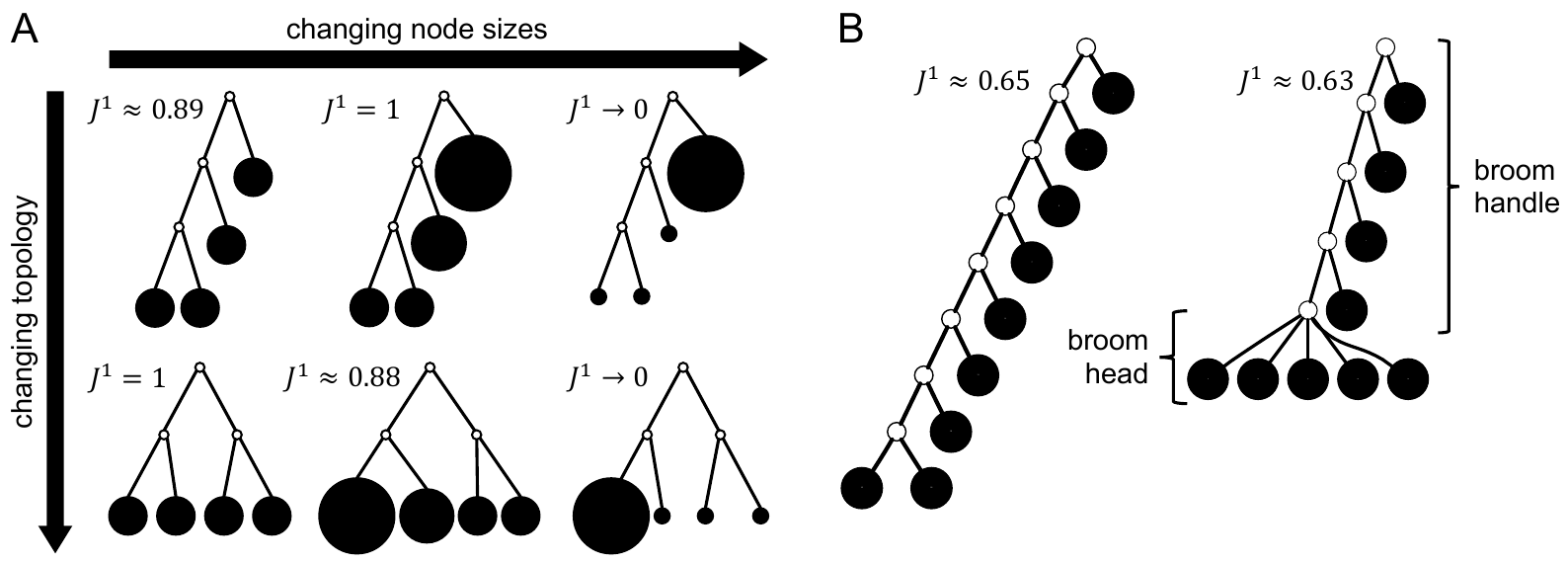}
	\caption{\textbf{A:} By varying relative node sizes, any non-linear tree can be made maximally or minimally balanced according to $J^1$, as illustrated here with a leafy four-leaf caterpillar tree (top row) and a leafy four-leaf tree with symmetric topology (bottom row). The leaf sizes in the first column are all equal and in the second column are $4, 2, 1, 1$. \textbf{B: } The leafy caterpillar and minimally balanced leafy broom tree on nine equally sized leaves. In all trees, open circles represent internal nodes of size zero.}
	\label{balcat}
\end{figure}
	
\begin{definition}[Broom tree]
	A \textbf{broom tree} is such that every internal node except the most distant from the root has outdegree $2$. The \textbf{broom head} is the star subtree rooted at the lowest internal node; the remainder of the tree is the \textbf{handle} (Figure~\ref{balcat}B).
\end{definition}

It is straightforward (see Appendix~\ref{J1Tb_proof}) to derive a general expression for $J^1$ for leafy broom trees with $n$ equally sized leaves, of which $k$ are in the head:
	\begin{equation}
		J^1_B(n, k, 1) := \frac{2\left( n \log_2 n - k \log_2 k + k \right)}{(n+k)(n-k+1)}.\label{J1Tb}
	\end{equation}
By varying the value of $k$ in this expression, we can verify, for example, that the caterpillar is not the minimally balanced leafy broom tree with nine equally sized leaves (Figure~\ref{balcat}B). The following proposition generalizes this result. 
	
\begin{proposition}\label{cat_prop}
	Among leafy trees with $n$ equally sized leaves and no nodes of outdegree 1, the caterpillar minimizes $J^1$ if and only if $n \le 4$.
\end{proposition}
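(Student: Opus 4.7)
The plan is to handle the two directions of the ``iff'' separately, relying on the broom formula (\ref{J1Tb}) as the main workhorse.

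\textbf{Sufficiency ($n \leq 4$).} First I would enumerate every leafy tree on $n$ equally sized leaves without any outdegree-1 nodes and compute $J^1$ for each. For $n = 2$ the cherry is the only such tree. For $n = 3$ there are exactly two trees---the caterpillar (with $J^1 = 3\log_2 3 / 5 \approx 0.951$) and the star ($J^1 = 1$)---so the caterpillar is strictly smallest. For $n = 4$ there are exactly five shapes: the caterpillar, the $k = 3$ broom, the symmetric binary tree, the tree whose root has one cherry child and two leaf children, and the star. I would verify, using (\ref{J1Tb}) or directly from (\ref{J1def}), that the caterpillar's value $8/9 \approx 0.889$ is strictly less than each of the other four.

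\textbf{Necessity ($n \geq 5$).} Here the strategy is to exhibit a single explicit competitor that beats the caterpillar, namely the leafy broom tree with $k = 3$. Applying (\ref{J1Tb}) with $k = 2$ and $k = 3$, the inequality $J^1_B(n, 3, 1) < J^1_B(n, 2, 1)$ reduces, after clearing the positive denominators and using $(n+3)(n-2) - (n+2)(n-1) = -4$, to
\[
h(n) := 3(\log_2 3 - 1)(n+2)(n-1) - 4 n \log_2 n > 0.
\]
I would first verify $h(5) \approx 2.70 > 0$ by direct computation. To extend the inequality to all integers $n \geq 5$, I would show $h$ is strictly increasing on $[5, \infty)$: the second derivative $h''(n) = 6(\log_2 3 - 1) - \frac{4}{n \ln 2}$ is positive for all $n \geq 2$ (so $h'$ is increasing on $[2, \infty)$), and one checks $h'(5) > 0$ by direct computation. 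Together these yield $h(n) > 0$ for every integer $n \geq 5$, so the $k = 3$ broom strictly beats the caterpillar in that range and the caterpillar cannot be the minimizer.

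\textbf{The main obstacle.} The qualitative picture is easy---the quadratic term $3(\log_2 3 - 1)(n+2)(n-1)$ eventually dwarfs the $O(n \log n)$ term, so the caterpillar clearly cannot be optimal for large $n$---but the proposition hinges on the transition occurring exactly between $n = 4$ and $n = 5$. At $n = 4$ the cross-multiplied difference between the caterpillar and the $k = 3$ broom is $86 - 54 \log_2 3 \approx 0.41$, a quantity small enough that any numerical bounds on $\log_2 3$ must be handled carefully at the boundary. The enumeration at $n = 4$ is routine but must be exhaustive; cataloguing exactly those five shapes is the other source of bookkeeping.
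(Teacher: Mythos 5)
Your proof is correct and follows essentially the same route as the paper's: both reduce the problem to comparing the caterpillar ($k=2$) with the $k=3$ broom via Equation~\ref{J1Tb}, arriving at the same inequality pitting $4n\log_2 n$ against $3(\log_2 3 - 1)(n+2)(n-1)$. You are in fact somewhat more thorough than the published argument, which neither enumerates all five four-leaf shapes to establish sufficiency at $n=4$ nor gives the convexity/monotonicity argument for $h$ that rigorously places the threshold between $n=4$ and $n=5$.
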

\begin{proof}
        If $n = 2$ then the proposition holds as the caterpillar is the only tree, and for $n = 3$ it holds because the caterpillar is less balanced than the star tree, which is the only alternative.
	From Equation~\ref{J1Tb} for $n > 3$ we find that
	$J^1_B(n,2,1) < J^1_B(n,3,1)$ if and only if
	\begin{equation}
		4n\log_2n + 3(1 - \log_2 3)(n+2)(n-1) > 0.
	\end{equation}
	Given $n > 3$, this inequality is satisfied if and only if $n$ is less than approximately 4.17. As $n$ must be a positive integer, the only valid solution is $n = 4$.
\end{proof}

{
The non-optimality of caterpillar trees can be understood by considering how the node balance scores and weights change when we remove the second lowest internal node of a caterpillar tree and reattach its child leaf to the lowest internal node.
First, the normalizing factor of $J^1$ (which is identical to Sackin's index) decreases.
Second, while the internal nodes that remain in the broom handle have the same balance scores and weights as before the change, the weight assigned to the root of the broom head, which remains maximally balanced, increases due to its additional leaf.
Although assigning more weight to the broom head and reducing the normalizing factor both increase $J^1$, these effects are counteracted by the removal of the handle's most balanced node.
The latter effect dominates when the leaf count is greater than 4 (see Appendix~\ref{app:catexp} for a more detailed explanation).
}

{
The analysis of broom trees illustrates the general principle that when two trees differ in outdegree distribution, $J^1$ accounts for differences in node balance scores and weights as well as in node depths. By contrast, if two leafy trees are both full $m$-ary and have the same number of leaves and the same leaf size distribution, then the difference in $J^1$ is solely determined by the difference in Sackin's index (because, by Proposition~\ref{proposition6}, the weighted sum of the node balance scores must be the same for both trees).
}

We further make the following conjecture, which we have exhaustively verified for trees with 12 or fewer leaves. 
\begin{conjecture} \label{conj_brooms}
 Among leafy trees with a given number of equally sized leaves and no nodes of outdegree 1, the tree that minimizes $J^1$ is a broom tree.
\end{conjecture}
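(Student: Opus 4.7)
I would show that any leafy tree $T$ with $n$ equally sized leaves and no outdegree-1 nodes can be replaced by a broom tree on $n$ leaves without increasing $J^1$, by establishing three structural claims in sequence: (i) a minimizer has at most one internal node of outdegree at least 3, which I call the candidate \emph{head}; (ii) if such a node exists, it lies at the maximal internal depth; and (iii) the binary internal nodes above the head form a caterpillar path ending in the head. A minimizer exists since the set of admissible trees on $n$ leaves is finite.

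\textbf{Step (iii) via the leafy tree identity.} Granted (i) and (ii), the above-head structure is a bifurcating leafy tree on $n-k+1$ leaves with size multiset $\{k,1,\dots,1\}$ and total magnitude $n$, where $k=d^+(\text{head})\ge 2$. Applying Proposition~\ref{proposition6} to this sub-structure, the numerator contribution $\sum_i S_i^*W_i$ from its internal nodes equals $H_2\cdot S = n\log_2 n - k\log_2 k$, which depends only on the leaf multiset and not on the sub-structure's shape. The head itself is maximally balanced ($W=1$) and contributes $k$ to both the numerator and to $I_{S,\text{gen}}$. Minimizing $J^1$ of the full tree therefore reduces to \emph{maximizing} the weighted path length $I_{S,\text{gen}}$ of the above-head bifurcating tree on $\{k,1,\dots,1\}$. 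Since a bifurcating tree on $m$ leaves attains maximum depth $m-1$ only if it is a caterpillar, and the caterpillar uniquely maximizes Sackin's index \citep{fischer_extremal_2021}, the caterpillar with the size-$k$ leaf at the deepest position is the unique maximizer; substituting recovers exactly the broom formula~\eqref{J1Tb}.

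\textbf{Steps (i) and (ii): exchange arguments; the main obstacle.} For (i), given two internal nodes $u,v$ of outdegree at least 3 with $u$ an ancestor of $v$, I would introduce a collapse move that promotes every child of $v$ to become a child of $u$ and then deletes $v$; if $u,v$ are incomparable the analogous move is performed at their lowest common ancestor. This strictly decreases $I_{S,\text{gen}}$ by $S_v^*$, and the claim is that the numerator $\sum_i S_i^*W_i$ drops at least proportionally. For (ii), if the unique bushy node $v$ has an internal descendant $w$ (binary by (i)), I would iteratively swap a leaf child of an ancestor of $w$ on the $v$-to-$w$ path with an internal sibling-subtree, pushing $v$ deeper by one edge per swap. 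The main difficulty lies in step (i): the collapse move changes the outdegree of $u$ from $d$ to $d+k-1$, which changes the logarithm base in every balance score at $u$ from $\log_d$ to $\log_{d+k-1}$, so per-node contributions are not comparable term by term. A natural tactic is to rewrite $W_{ij}^1 = -\frac{S_j}{S_i^*\ln d(i)} \ln\frac{S_j}{S_i^*}$, isolating the base-change as the scalar factor $1/\ln d(i)$, and then combine Schur-concavity of the unnormalized Shannon entropy with the decrease in $I_{S,\text{gen}}$ to obtain the required monotonicity of the ratio $J^1$. Checking this inequality on small test cases (e.g., merging two outdegree-3 nodes at adjacent depths into a single outdegree-5 node) should motivate and guide the general argument.
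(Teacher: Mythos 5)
The statement you are addressing is presented in the paper as Conjecture~\ref{conj_brooms}: the authors give no proof, only exhaustive verification for trees with at most 12 leaves, so there is no argument of theirs to compare against, and your proposal does not close the gap either. You correctly identify step (i) as the crux, but the exchange move you propose for it fails in the direction you need. Take $n=5$ with the root having two leaf children and one internal child $v$ of outdegree $3$, all five leaves of size $1$. Then $I_{S,\mathrm{gen}}=8$, the node $v$ has $W_v=1$ and weight $S_v^*=3$, and the root's balance score is the base-$3$ entropy of $(\frac{1}{5},\frac{1}{5},\frac{3}{5})$, approximately $0.865$, with weight $5$; hence $J^1\approx 7.325/8\approx 0.916$. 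Collapsing $v$ into the root yields the five-leaf star with $J^1=1$: the numerator drops by about $2.33$ while proportionality would require a drop of $S_v^*\cdot J^1\approx 2.75$, so the collapse move strictly \emph{increases} $J^1$. This is not a pathology of one example: iterating your collapse always terminates at the star tree, which is maximally balanced, while the opposite move (resolving multifurcations into binary nodes) cannot be monotone either, since by Proposition~\ref{cat_prop} the fully binary caterpillar is not the minimizer for $n\ge 5$. Neither direction of a local degree-changing exchange decreases $J^1$ uniformly, which is precisely why the minimizer's structure is delicate and the statement remains open.

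Your step (iii), by contrast, is sound and is worth isolating: once the tree is assumed to have a single multifurcating node of outdegree $k$ at maximal depth, Proposition~\ref{proposition6} applied to the contracted bifurcating tree on the size multiset $\{k,1,\dots,1\}$ fixes the numerator at $n\log_2 n-k\log_2 k+k$ independently of the handle's shape, so minimizing $J^1$ reduces to maximizing the weighted path length of the handle; this is consistent with the paper's remark that, within a fixed full $m$-ary class, differences in $J^1$ are governed solely by Sackin's index. Two caveats remain even there: the claim that the caterpillar with the size-$k$ leaf at the deepest position uniquely maximizes the weighted path length needs its own (short) argument, since the cited extremal result of \citet{fischer_extremal_2021} concerns uniform leaf sizes; and step (ii) is likewise only asserted, with an exchange whose effect on $J^1$ is unverified. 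As written, the proposal reduces the conjecture to claims (i) and (ii), and the route offered for (i) provably goes the wrong way.
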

	
\subsection{Least balanced leafy broom trees in the large $n$ limit}

The obvious question arising from Proposition~\ref{cat_prop} and Conjecture~\ref{conj_brooms} is this: Which leafy broom tree is the least balanced for a given leaf count? To examine whether the answer is sensitive to our assumption of equal leaf sizes, we will slightly relax this assumption by considering leafy broom trees in which all leaves in the head have the same size, and all leaves in the handle have the same size, while allowing these two sizes to differ. The task is then to investigate $k^* = \arg \min_{2 \le k \le n} J^1_B(n,k,p)$, where $n$ is the total number of leaves, $k$ is the number of leaves in the head, $p > 0$ is the size of each leaf in the head, relative to the size of each leaf in the handle, and
    \begin{equation}\label{J1B} 
        J^1_B(n, k, p) := \frac{2[kp+(kp+n-k)\log_2(kp+n-k) - kp \log_2(kp)]}{(2kp+n-k)(n-k+1)}.
    \end{equation}
Note that, although $k^*$ depends on $n$ and $p$, we will tend to avoid writing $k^*_{n,p}$ to simplify our notation.

\begin{figure}[h]
	\begin{center}
		\includegraphics[width = \textwidth]{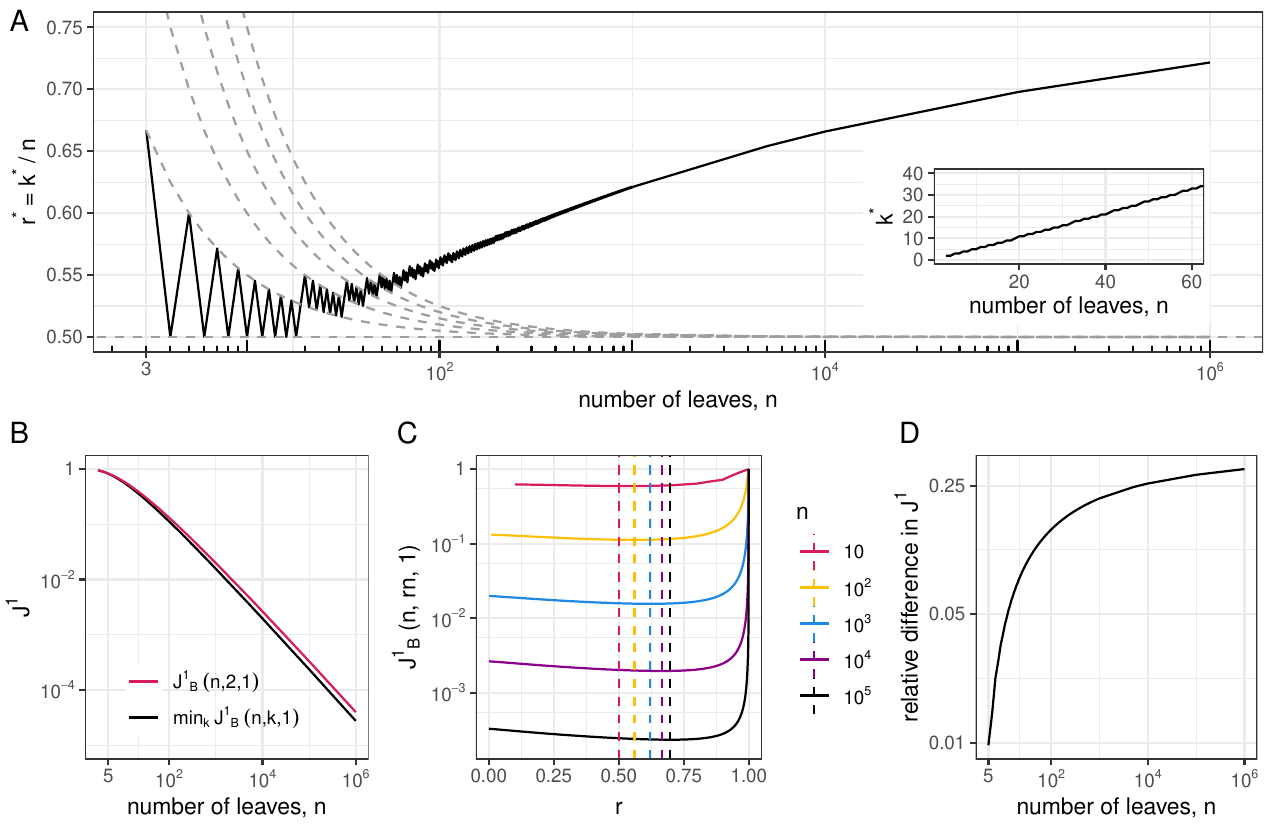}
		\caption{ Numerical results for least balanced leafy broom trees with equally sized leaves. \textbf{A:} Values of $r = k/n$ (main plot) and $k$ (inset) that minimize $J^1_B(n, k, 1)$. 
        The discrete values are connected by solid black lines to guide the eye. Dashed grey curves in the main plot show $r^* = \frac{n+a}{2n}$ for $a \in \{ 0,1,2,3,4,5 \}$. These curves intersect with the $r^*$ values of trees for which $k^* = \frac{n+a}{2}$. \textbf{B:} Exact $J^1_B$ values for caterpillar trees and minimally balanced leafy broom trees with equally sized leaves, as a function of $n$. \textbf{C:} $J^1_B(n,rn,1)$ versus $r$ for different values of $n$. The dashed lines are at the values of $r$ that minimize $J^1$. \textbf{D:} The relative difference in $J^1$ values between the minimally balanced broom tree and the caterpillar tree, as a function of $n$.}
		\label{Rfigures}
	\end{center}
\end{figure}

Among leafy broom trees with equally sized leaves (that is, $p = 1$), a numerical investigation of Equation~\ref{J1B} reveals that that $r^* = k^*/n$ is never less than $\frac{1}{2}$ and oscillates while generally increasing with $n$ (Figure~\ref{Rfigures}A).
We can understand the upward trend in $r^*$ by considering the large tree limit.





\begin{proposition} \label{largebrooms}
Treating $J_B^1$ as a function of $n$, the following hold:

If $0 < p \le \frac{1}{2}$ then there exists $N_p$ such that for all $n>N_p$, $k^* = 2$. Hence
\begin{gather}
    \min_{2 \le k \le n} J^1_B(n,k,p) = J^1_B(n, 2, p) \sim \frac{2 \log_2 n}{n}.
\end{gather}

If $p > \frac{1}{2}$ then
\begin{gather}
    1 - r_n^* \sim \frac{p\sqrt{2}} {\sqrt{(2p-1)\log_2{n}}},\label{arg_min_growth_large_p} \\
    \min_{2 \le k \le n} J^1_B(n,k,p) \sim \frac{\log_2 n}{np}.
\end{gather}
\end{proposition}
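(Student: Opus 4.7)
The plan is to reparameterize $m = n - k$, set $L = (n-m)p$ and $K = L + m$, and rewrite
\[
J^1_B(n, n-m, p) = \frac{2L + 2\bigl(K\log_2 K - L\log_2 L\bigr)}{(L + K)(m+1)},
\]
then use the Taylor expansion $K\log_2 K - L\log_2 L = m\log_2 L + m/\ln 2 + m^2/(2L\ln 2) + O(m^3/L^2)$, valid for $m \ll L$. The minimizer has a different character in the two cases: for $p > 1/2$ it satisfies $m \to \infty$ with $m/n \to 0$ at a prescribed rate, whereas for $p \le 1/2$ it stays at the boundary value $k = 2$.

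For $p > 1/2$, I would introduce the scaling $m = \alpha n / \sqrt{\log_2 n}$ with $\alpha > 0$ fixed, substitute into the expansion, and collect terms by order of magnitude. Careful bookkeeping of the leading and subleading contributions gives
\[
J^1_B = \frac{\log_2 n}{p n} + \frac{\sqrt{\log_2 n}}{n}\left[\frac{1}{\alpha} + \frac{(2p-1)\alpha}{2p^2}\right] + o\!\left(\frac{\sqrt{\log_2 n}}{n}\right).
\]
Minimizing the bracketed expression (by AM--GM or calculus) at $\alpha^* = p\sqrt{2/(2p-1)}$, with minimum value $\sqrt{2(2p-1)}/p$, immediately yields $1 - r_n^* \sim \alpha^*/\sqrt{\log_2 n} = p\sqrt{2/((2p-1)\log_2 n)}$ and $\min J^1_B \sim \log_2 n/(np)$. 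To justify restricting attention to this scaling, I would rule out three competing regimes: (i) $m = \Theta(n)$ with $r = 1 - m/n$ bounded away from $1$ gives $J^1_B \sim 2\log_2 n/(n(1 + r(2p-1))) \ge \log_2 n/(np)$, with equality only in the limit $r \to 1$; (ii) $m$ bounded keeps $J^1_B$ close to $1/(m+1) = \Theta(1)$; and (iii) $m \to \infty$ with $m = o(n/\sqrt{\log_2 n})$ forces $\alpha \to 0$, blowing up the $1/\alpha$ term of the bracket.

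For $p \le 1/2$, the aim is to prove $J^1_B(n, k, p) > J^1_B(n, 2, p)$ for every $k \ge 3$ once $n$ is large enough, by splitting the range of $k$. When $k = \Theta(n)$ with $r = k/n \in (0, 1)$ bounded, the same leading expansion yields $J^1_B \sim 2\log_2 n/(n(1 + r(2p-1)))$, which for $p < 1/2$ strictly exceeds the caterpillar's $2\log_2 n/n$; the borderline case $p = 1/2$ is handled by the strictly positive $\Theta(1/n)$ subleading correction. When $k = o(n)$, a second-order Taylor expansion analogous to the one above yields
\[
J^1_B(n, k, p) - J^1_B(n, 2, p) = \frac{2(k - 2)(1-p)\log_2 n}{n^2} + O\!\left(\frac{k \log_2 k}{n^2}\right),
\]
which is strictly positive for $k \ge 3$ and $n$ large enough, because for $p \le 1/2$ and $k < n$ the error term is dominated by the $\Theta(k \log_2 n/n^2)$ margin (the ratio $\log_2 k/\log_2 n$ being capped by $1 \le (1-p)/p$).

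The hard part will be the cross-regime bookkeeping: for $p > 1/2$ one must rigorously splice the asymptotic lower bounds from regimes (i)--(iii) with the precise expansion in the $\alpha$-regime; for $p \le 1/2$ one must verify that the second-order error is controlled uniformly across the transition between the $k = o(n)$ and $k = \Theta(n)$ analyses, where neither a purely leading-order nor a purely $O(1)$-level estimate suffices. Once this bookkeeping is executed, the optimizer at $\alpha^*$ and the boundary value $k = 2$ emerge unambiguously.
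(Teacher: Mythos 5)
Your computations check out and your overall strategy is sound, but you take a genuinely different route from the paper for the key asymptotic \eqref{arg_min_growth_large_p}. The paper (Appendix~\ref{largebrooms_proof}) first pins down the asymptotic \emph{value} of the minimum by a sequence-based case analysis on $r_n = k_n/n$ (essentially your regimes (i)--(iii)), and then, for $p>\tfrac12$, locates the minimizer by computing the exact partial derivative of the continuous relaxation $x \mapsto J_B^1(n,xn,p)$, showing the discrete and continuous minimizers differ by at most $1/n$, and extracting $1-x_n^*$ from the first-order condition $A_p(x_n^*)n + B_p\ln n + C_p(x_n^*) + D_p(x_n^*)n\ln n = 0$. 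You instead substitute the scaling ansatz $m=\alpha n/\sqrt{\log_2 n}$ into a two-term expansion and minimize the second-order coefficient by AM--GM; the bracket, its minimizer $\alpha^* = p\sqrt{2/(2p-1)}$, and the resulting asymptotics all agree with the paper. Your variational approach is more transparent about \emph{why} the $\sqrt{\log_2 n}$ scaling appears (it balances the $1/\alpha$ and $\alpha$ terms), whereas the paper's derivative approach more readily delivers rigorous localization of $r_n^*$. To close your sketch you still must (a) establish uniformity of the $o(\sqrt{\log_2 n}/n)$ error in $\alpha$ on compacts and handle sequences for which $(1-r_n)\sqrt{\log_2 n}$ does not converge (strictness of the AM--GM minimum then does the rest), and (b) cover the regime $n/\sqrt{\log_2 n}\ll m \ll n$, which is absent from your list (i)--(iii), though it is ruled out by the blow-up of the $(2p-1)\alpha/(2p^2)$ term. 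For $p=\tfrac12$ the paper sidesteps all delicate cross-regime splicing by invoking the exact monotonicity result of Proposition~\ref{cats_p_one_half} ($\partial J_B^1/\partial k>0$ for all admissible $k$ once $n\ge 4$), which yields $k^*=2$ for every $n$, not merely large $n$; your plan of weighing a $\Theta(1/n)$ subleading correction against an $O(k\log_2 k/n^2)$ error across the $k=o(n)$ versus $k=\Theta(n)$ transition is exactly where your bookkeeping is thinnest, and I would replace it with that exact derivative computation.
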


\begin{proof}
    See Appendix~\ref{largebrooms_proof}.
\end{proof}

Proposition~\ref{largebrooms} implies that the caterpillar is the least balanced broom tree when $p \le \frac{1}{2}$ and $n$ is sufficiently large. In contrast if $p > \frac{1}{2}$ then, for sufficiently large $n$, the least balanced broom tree has most of its leaves on the head, with relatively few on the handle. In the case of equally sized leaves ($p = 1$), the absolute difference in $J^1$ values between the minimally balanced broom and the caterpillar is always small and decreases rapidly as the number of leaves increases (Figure~\ref{Rfigures}B,C), yet Proposition~\ref{largebrooms} implies that the relative difference approaches $\frac{1}{2}$ (Figure~\ref{Rfigures}D).

\begin{figure}[h]
	\centering
	\includegraphics[width=\textwidth]{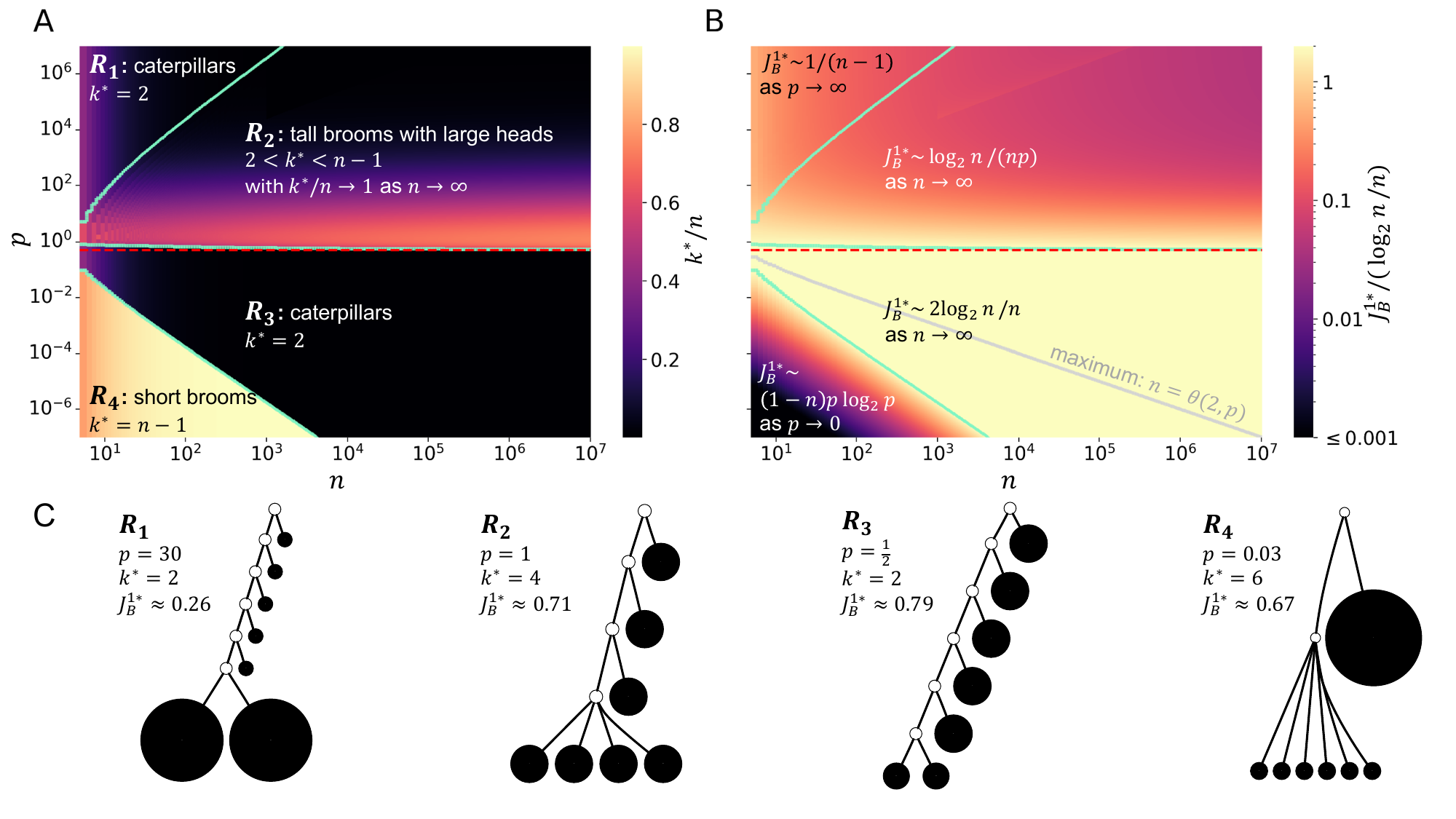}
	\caption{Least balanced broom trees in $n$-$p$ space. \textbf{A:} For $n > 4$, there are four cases of $k^*=\arg\min_{2\leq k\leq n} J_B^1(n,k,p)$, separated here by turquoise curves. The case $k^* = 2$ corresponds to caterpillar trees. The middle curve approaches $p = \frac{1}{2}$ (red dashed line) as $n \to \infty$. \textbf{B:} $J_B^{1*} = \min_{2 \le k \le n-1} J_B^1 (n, k, p)$ relative to $\log_2 n / n$. Each region is labelled with the corresponding asymptotic behaviour of $J_B^{1*}$. The grey curve at $n = \theta(2, p)$ is where $J_B^{1*}$ is maximal (the formula for $\theta(k,p)$ is given in Proposition \ref{broom_J1_in_p}). \textbf{C:} A representative least balanced broom tree for each of the four cases when $n = 7$. Open circles represent internal nodes of size zero.}
	\label{boundaries_plot}
\end{figure}

\subsection{Least balanced leafy broom trees in the general case}

For trees with fewer leaves the picture is more complicated. If $n=3$ then $k^* = 2$ for all $p>0$. For $n=4$ there are two cases: $k^* = 2$ and $k^* = n - 1 = 3$, depending on $p$. For $n > 4$ there are four cases corresponding to regions of $n$-$p$ space that we will label $R_1$ to $R_4$ (Figure~\ref{boundaries_plot}A, \ref{boundaries_plot}C). 
In the non-adjacent $R_1$ and $R_3$, we have $k^* = 2$, which means that the least balanced trees in these regions are caterpillars. In $R_2$, we find $2 < k^* < n-1$ with $k^*/n \to 1$ as $n \to \infty$, so that the least balanced trees have both long handles and large heads. In $R_4$, $k^* = n - 1$, corresponding to trees of height 2.

To locate the regional boundaries, we first note that Proposition~\ref{largebrooms} implies that the curve separating regions $R_2$ and $R_3$ approaches $p = \frac{1}{2}$ as $n \to \infty$. Proposition~\ref{cat_prop} also tells us that $p > 1$ for all $(n,p) \in R_1$ and  $p < 1$ for all $(n,p) \in R_3$. The following proposition further establishes that $p > \frac{1}{2}$ for all $(n,p) \in R_2$ and $p < \frac{1}{2}$ for all $(n,p) \in R_4$. Hence, in summary, the $R_2$-$R_3$ boundary is bounded below by $p=\frac{1}{2}$, bounded above by $p=1$, and approaches the lower bound as $n \to \infty$.


\begin{proposition}\label{cats_p_one_half}
If $p = \frac{1}{2}$ then $k^* = 2$ for all $n \ge 3$.
\end{proposition}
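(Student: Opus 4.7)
The plan is to reduce the claim to a one-variable concavity inequality. With $v := k/2$, formula~(\ref{J1B}) becomes
\[
J^1_B(n,k,\tfrac12) \;=\; \frac{2\psi(v)}{n(n-2v+1)}, \qquad \psi(v) := v + (n-v)\log_2(n-v) - v\log_2 v,
\]
and the target inequality $J^1_B(n,k,\tfrac12) \ge J^1_B(n,2,\tfrac12)$ (required for every integer $k \in \{2,\ldots,n\}$) rearranges to
\[
D(v) := (n-1)\psi(v) - (n-2v+1)\,\psi(1) \;\ge\; 0.
\]
I would prove the stronger statement that $D(v) \ge 0$ on the whole interval $v \in [1, n/2]$.

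A direct computation shows that $\psi''(v) = -\frac{n-2v}{v(n-v)\ln 2} \le 0$ on $[1, n/2]$, so $\psi$ is concave there; since the second term of $D$ is affine in $v$, $D$ itself is concave. Direct substitution gives $D(1)=0$, and $\psi(n/2)=n/2$ simplifies the other endpoint to $D(n/2) = \tfrac{n(n-1)}{2} - 1 - (n-1)\log_2(n-1)$. Once $D(n/2) \ge 0$ is established, any $v \in [1, n/2]$ can be written as $\lambda\cdot 1 + (1-\lambda)\cdot (n/2)$ for some $\lambda \in [0,1]$ and concavity yields
\[
D(v) \;\ge\; \lambda D(1) + (1-\lambda)\, D(n/2) \;=\; (1-\lambda)\, D(n/2) \;\ge\; 0,
\]
which finishes the argument.

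The main remaining step is the scalar inequality $g(n) := n(n-1)/2 - 1 - (n-1)\log_2(n-1) \ge 0$ for all integers $n \ge 3$. Since $g''(n) = 1 - \log_2 e/(n-1) > 0$ for $n \ge 3$, $g$ is convex on $[3,\infty)$; together with the direct checks $g(3) = 0$ and $g'(3) = \tfrac32 - \log_2 e > 0$, convexity forces $g$ to be non-negative on $[3,\infty)$ with equality only at $n = 3$. This near-tightness at $n = 3$ is genuine, reflecting the tie between the $3$-leaf caterpillar and the $3$-leaf star, both of which attain $J^1 = 1$. This endpoint bound is the one part of the argument that requires actual analysis; once it is in hand the rest follows from the elementary principle that a concave function non-negative at both endpoints of an interval is non-negative throughout.
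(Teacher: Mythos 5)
Your proof is correct, and it takes a genuinely different route from the paper's. The paper differentiates $J^1_B(n,k,\tfrac12)$ with respect to $k$, bounds the derivative below via the series expansion $n\ln(2n/k-1) \ge 2(n-k)$ and the estimate $\ln(k(2n-k)) \le 2\ln n$, and concludes that the function is increasing in $k$ (the stated bound is positive only for $n \ge 4$, so the paper's argument gives strict monotonicity only from $n=4$ onward). You instead substitute $v = k/2$, observe that the claim is equivalent to non-negativity of $D(v) = (n-1)\psi(v) - (n-2v+1)\psi(1)$ on $[1,n/2]$, and exploit the concavity of $\psi$ (hence of $D$, since the subtracted term is affine) to reduce everything to the two endpoint checks $D(1)=0$ and $D(n/2)=g(n)\ge 0$; the latter follows from convexity of $g$ with $g(3)=0$ and $g'(3)=\tfrac32-\log_2 e>0$. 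I verified the computations: $\psi''(v) = -\frac{n-2v}{v(n-v)\ln 2}\le 0$ on $[1,n/2]$, the rearrangement is legitimate because $n-2v+1 = n-k+1 \ge 1$, and $g''(n) = 1 - \frac{\log_2 e}{n-1} > 0$ for $n\ge 3$. Your approach proves slightly less than the paper's (you show $k=2$ is a minimizer but not that $J^1_B$ is monotone in $k$), but it buys a cleaner, more elementary argument and handles $n=3$ uniformly — a case where monotonicity genuinely fails, since $J^1_B(3,2,\tfrac12)=J^1_B(3,3,\tfrac12)=1$, a tie your remark about the caterpillar and the star correctly identifies.
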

\begin{proof} See Appendix~\ref{cats_p_one_half_proof}.
\end{proof}

{
Just as we did for the large-$n$ limit, we can also describe the nature of $J^1$ for the least-balanced broom tree in the large- and small-$p$ limits.
}

{
\begin{proposition}\label{asymptotic}
    For any fixed $n \ge 3$:
	\begin{enumerate}
	\item As $p\to \infty$, $J_B^1(n,k^*,p) \sim \frac{1}{n-1}$.
        \item As $p\to 0$, $J_B^1(n,k^*,p) \sim p(1-n)\log_2 p$.
        \end{enumerate}
\end{proposition}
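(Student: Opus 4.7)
My plan is to analyze $p \mapsto J^1_B(n, k, p)$ separately for each fixed $k \in \{2, \dots, n\}$, extract its leading-order behaviour in each of the two limits $p \to \infty$ and $p \to 0$, and identify $k^*$ accordingly. Because $k$ ranges over a finite set, pointwise asymptotics for each $k$ suffice: the $k$ whose limit (or leading order) is smallest, being strictly separated from the others, must equal $k^*$ once $p$ is close enough to the relevant extreme.

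For the large-$p$ claim, I would use the rewrite
\[
    (kp+n-k)\log_2(kp+n-k) - kp\log_2(kp) = kp\log_2\!\left(1+\tfrac{n-k}{kp}\right) + (n-k)\log_2(kp+n-k),
\]
whose first term tends to $(n-k)/\ln 2$ and whose second is $O(\log p)$. Together with the $2kp$ term outside, this shows the numerator of $J^1_B(n,k,p)$ is asymptotic to $2kp$, while the denominator $(2kp+n-k)(n-k+1) \sim 2kp(n-k+1)$. Hence $J^1_B(n,k,p) \to \tfrac{1}{n-k+1}$ for each fixed $k$, a limit that is strictly increasing in $k$ and therefore uniquely minimized at $k = 2$ with value $\tfrac{1}{n-1}$. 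This gives $k^* = 2$ for all sufficiently large $p$ and establishes the first part.

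For the small-$p$ claim I would split into three sub-cases. (i) For $k \le n-2$ the term $-kp\log_2(kp) \to 0$ and direct substitution yields $J^1_B(n,k,p) \to \tfrac{2\log_2(n-k)}{n-k+1}$, a strictly positive constant. (ii) The formula gives $J^1_B(n,n,p) = 1$ identically. (iii) For $k = n-1$, set $a = (n-1)p \to 0$; the denominator $2(2a+1) \to 2$, and using $\ln(1+a) = a - a^2/2 + O(a^3)$ one has $(a+1)\log_2(a+1) = (a + a^2/2)/\ln 2 + O(a^3)$, so the bracketed numerator equals $a(1 + 1/\ln 2) + a^2/(2\ln 2) - (a \ln a)/\ln 2 + O(a^3)$. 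Since $|a\log_2 a|$ dominates $a$ as $a \to 0^+$, the leading order is $-a\log_2 a$, giving $J^1_B(n,n-1,p) \sim -(n-1)p\log_2\bigl((n-1)p\bigr)$; absorbing $-(n-1)p\log_2(n-1) = O(p)$ into the error then yields $J^1_B(n,n-1,p) \sim p(1-n)\log_2 p$.

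Since the $k = n-1$ value tends to $0$ while the other values remain bounded below by positive constants (case (i)) or equal $1$ (case (ii)), $k^* = n-1$ for $p$ small enough, completing the argument. The only subtle step---not a deep obstacle---is recognizing in the $k = n-1$ expansion that $|a \log_2 a|$ dominates $a$ as $a \to 0^+$, so the logarithmic rather than the linear term controls the leading order, and correspondingly that $-(n-1)p\log_2(n-1) = O(p)$ is negligible against $p|\log_2 p|$ in the final step.
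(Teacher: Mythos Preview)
Your proof is correct and follows the same overall strategy as the paper: analyze $J^1_B(n,k,p)$ for each fixed $k$, determine its limiting behaviour, and thereby identify $k^*$ in each regime. Your identification of $k^*$ is slightly more elementary in both parts---you use that finitely many functions with distinct (respectively, separated) limits are eventually ordered by those limits, whereas the paper invokes a derivative computation to establish monotonicity in $k$ for large $p$ and an unnecessary critical-point analysis of a continuous extension in $k$ for small $p$---but the substance is the same.
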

}

\begin{proof}
    See Appendix~\ref{asym_JB_p}.
\end{proof}

Lastly, we show that $R_1$ and $R_4$ contain trees with arbitrarily large leaf counts: 

\begin{proposition}\label{R3_R4_boundary}
    The boundaries between the $R_3$ and $R_4$ regions, and between the $R_1$ and $R_2$ regions, are asymptotically $p=2 / n^2$ and $p=n^2\log_2 n / 3$, respectively.
\end{proposition}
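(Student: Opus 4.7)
The plan is to locate each boundary as the curve on which $k=2$ (the caterpillar) is overtaken by a specific competitor: $k=n-1$ at low $p$ (the $R_3$-$R_4$ boundary) and $k=3$ at high $p$ (the $R_1$-$R_2$ boundary). Equating the leading asymptotic terms of $J_B^1$ at these values then yields the two formulas.

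For the $R_3$-$R_4$ boundary, I would Taylor-expand $J_B^1(n,k,p)$ for small $p$. Setting $u=n-k$ and $v=kp$, expanding $(u+v)\log_2(u+v)-v\log_2 v$ in powers of $v/u$ and dividing by $(2v+u)(u+1)$ gives the leading-order formula $J_B^1(n,k,p)=2\log_2(n-k)/(n-k+1)+O(p\log p)$. This is $\sim 2\log_2 n/n$ at $k=2$, vanishes at $k=n-1$ (since $\log_2 1=0$), and for $3\le k\le n-2$ a short check shows $2\log_2(n-k)/(n-k+1)>2\log_2(n-2)/(n-1)$ for all large $n$, ruling out intermediate $k$ as overtakers. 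Keeping the next-order term at $k=n-1$ (where $x:=(n-1)p$ is small), the numerator expands as $-2x\log_2 x + O(x)$ while the denominator tends to $2$, giving $J_B^1(n,n-1,p)\sim -(n-1)p\log_2((n-1)p)$. Equating with $2\log_2 n/n$ and substituting the ansatz $p=c/n^2$ yields $c\log_2 n/n\sim 2\log_2 n/n$, hence $c=2$ and $p\sim 2/n^2$.

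For the $R_1$-$R_2$ boundary, I would expand $J_B^1(n,k,p)$ for large $p$. Writing $a=kp$, $b=n-k$ and using $a\log_2(1+b/a)=b/\ln 2+O(b^2/a)$ gives
\begin{equation*}
J_B^1(n,k,p)=\frac{1}{n-k+1}+\frac{(n-k)\log_2(kp)}{kp(n-k+1)}+O(1/p).
\end{equation*}
Subtracting the $k=2$ case and simplifying for $k$ fixed and $n$ large,
\begin{equation*}
J_B^1(n,k,p)-J_B^1(n,2,p)\sim\frac{k-2}{n^2}-\frac{k-2}{2k}\cdot\frac{\log_2 p}{p}.
\end{equation*}
Cancelling $k-2$ (valid for $k>2$), the crossover occurs at $\log_2 p/p=2k/n^2$, i.e., $p_k\sim n^2\log_2 n/k$. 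Since $p_k$ strictly decreases with $k$, the first competitor to catch $k=2$ as $p$ decreases from infinity is $k=3$, yielding $p\sim n^2\log_2 n/3$.

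The main obstacle will be showing uniformity in $k$: ruling out $k$ growing with $n$ as earlier overtakers. For $k\sim\alpha n$ with $\alpha$ bounded away from $0$, a direct calculation shows $J_B^1(n,k,p)\to 1/((1-\alpha)n+1)$ as $p\to\infty$, exceeding $1/(n-1)$ by order $1/n$; balancing this against the $\log_2 p/p$ correction places the corresponding crossover at $p\sim (1-\alpha)n\log_2 n/(2\alpha)$, far below the $k=3$ curve. For $k$ growing sub-linearly, the same expansion extends uniformly and $p_k\sim n^2\log_2 n/k$ remains decreasing in $k$, so $k=3$ remains the unique winner. At the low-$p$ end, $J_B^1(n,n-2,p)\to 2/3$ as $p\to 0$, vastly exceeding $2\log_2 n/n$, and analogous bounds for other $k$ close to $n-1$ confirm $k=n-1$ is the unique overtaker. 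Controlling the $O(1/p)$ and $O(p\log p)$ remainders uniformly in $k$ near the degenerate endpoints $k=2$ and $k=n-1$ is the most delicate technical step.
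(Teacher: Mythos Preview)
Your approach is essentially the same as the paper's: both locate the $R_3$-$R_4$ boundary by equating $J_B^1(n,2,p)$ with $J_B^1(n,n-1,p)$, approximate the caterpillar side by $2\log_2 n/n$, and expand the $k=n-1$ side in the regime where $(n-1)p$ is small. The paper then shows $np\to 0$ along the boundary and reduces the matching condition to $np\log_2(np)=-2\log_2 n/n$, which it solves via the Lambert $W$ function; your ansatz $p=c/n^2$ is simply a shortcut to the same conclusion and is equally valid once one knows (or assumes) $np\to 0$.

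For the $R_1$-$R_2$ boundary the paper gives only a one-line sketch (``a similar analysis''), so your large-$p$ expansion and identification of $k=3$ as the first overtaker of the caterpillar actually supplies more detail than the paper does. Your expansion $J_B^1(n,k,p)=\tfrac{1}{n-k+1}+\tfrac{(n-k)\log_2(kp)}{kp(n-k+1)}+O(1/p)$ and the resulting crossover $p_k\sim n^2\log_2 n/k$ are correct, and the monotonicity argument for why $k=3$ wins is the right idea. Your discussion of uniformity in $k$ (ruling out $k$ growing with $n$ as an earlier overtaker) goes beyond what the paper addresses; the paper tacitly assumes the boundary is governed by the adjacent competitor, so your treatment here is more careful rather than different in spirit.
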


\begin{proof}
    See Appendix~\ref{R3_R4_boundary_proof}.
\end{proof}

{
The following results shed further light on how, for a given number of leaves, the $J^1$ value of the least balanced broom tree -- that is, $\min_{2 \le k \le n-1} J_B^1 (n, k, p) = J_B^1 (n, k^*, p)$ (abbreviated to $J_B^{1*}$ in Figure~\ref{boundaries_plot}B) -- varies with $p$.
}

\begin{proposition}\label{broom_J1_in_p}
For all $n \ge 3$ and $k \in \{2,\dots, n-1 \}$, $J^1_B(n,k,p)$ as a function of $p > 0$ is strictly decreasing if $n > \theta(k, p)$ and strictly increasing if $n < \theta(k, p)$, where $\theta(k, p) = \frac{2}{kp} + k(1 - p)$.
\end{proposition}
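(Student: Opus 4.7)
The plan is to fix integers $n\ge 3$ and $k\in\{2,\dots,n-1\}$, differentiate $J^1_B(n,k,p)$ directly with respect to $p$, and show that $\partial_p J^1_B$ has the same sign as $\theta(k,p)-n$. Writing $J^1_B = 2N/D$ with
\[
N(p) = kp + S\log_2 S - kp\log_2(kp), \qquad D(p) = (2kp+n-k)(n-k+1),
\]
where $S = S(p) := kp + n-k$, the quotient rule reduces the proposition to an analysis of the sign of $N'(p)\,D(p) - N(p)\,D'(p)$.

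Elementary differentiation (the $1/\ln 2$ contributions from $S\log_2 S$ and $kp\log_2(kp)$ cancel exactly because $S' = k$) yields
\[
N'(p) = k\bigl[1 + \log_2(S/(kp))\bigr], \qquad D'(p) = 2k(n-k+1).
\]
Pulling the common factor $k(n-k+1)$ out of $N'D-ND'$, the sign of $\partial_p J^1_B$ agrees with that of
\[
A(p) := (2kp+n-k)\bigl[1+\log_2(S/(kp))\bigr] - 2\bigl[kp + S\log_2 S - kp\log_2(kp)\bigr].
\]
Expanding and collecting, the coefficient of $\log_2 S$ is $(2kp+n-k) - 2S = -(n-k)$ and the coefficient of $\log_2(kp)$ is $-(2kp+n-k) + 2kp = -(n-k)$, while the non-logarithmic terms collapse to $(n-k)$. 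Hence the expression factorises as
\[
A(p) = (n-k)\bigl[1 - \log_2\bigl(kp\,(kp+n-k)\bigr)\bigr].
\]

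From this compact form the proposition follows at once. Since $k\in\{2,\dots,n-1\}$ the prefactor $k(n-k+1)(n-k)$ is strictly positive, so $\sign(\partial_p J^1_B) = \sign\bigl(1-\log_2(kp(kp+n-k))\bigr)$. Rearranging, $n>\theta(k,p) = 2/(kp)+k(1-p)$ is equivalent (after multiplying both sides by $kp>0$) to $kp(kp+n-k)>2$, hence to $\log_2(kp(kp+n-k))>1$, giving $\partial_p J^1_B < 0$; the reverse inequality gives strict increase, and $n = \theta(k,p)$ gives a stationary point. The only delicate step in the argument is the algebraic cancellation that collapses the four-term expansion of $A(p)$ to the single factorisation above; once that is verified, the equivalence with the $\theta$-condition is an immediate rearrangement.
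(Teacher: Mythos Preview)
Your proof is correct and follows essentially the same approach as the paper: both compute $\partial_p J^1_B$, factor it as a positive multiple of $1-\log_2[kp(kp+n-k)]$, and note that this has the same sign as $\theta(k,p)-n$. The paper's version simply omits the intermediate algebra that you have carefully written out.
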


\begin{proof}
Since $n > k$, 
    \begin{align*}
    \sign \left( \frac{\partial J^1_B}{\partial p} (n,k,p) \right) &= 
    \sign \left( \frac{2k(n-k) (1 - \log_2 [kp(n - k(1 - p))])}{(2kp+n-k)^2(n-k+1)} \right) \\ 
    &= \sign(\theta(k,p) - n).
    \end{align*}  
\end{proof}

\begin{proposition}\label{min_J1_in_p}
    For all $n \ge 3$, $\min_{2 \le k \le n} J^1_B(n,k,p)$ as a function of $p > 0$ is strictly decreasing if $n > \theta(2, p)$ and strictly increasing if $n < \theta(2, p)$, where $\theta(2, p) = \frac{1}{p} + 2(1 - p)$.
\end{proposition}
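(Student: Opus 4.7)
The plan is to combine an envelope-style argument with a careful analysis of the unimodal structure of each $J^1_B(n,k,\cdot)$ provided by Proposition~\ref{broom_J1_in_p}. Write $G(p) := \min_{2 \le k \le n-1} J^1_B(n,k,p)$; the value $k=n$ can be excluded because $J^1_B(n,n,p) = 1$ is never a minimizer. For each $k$ let $p_k^*$ denote the unique positive root of $\theta(k,p) = n$, so that $J^1_B(n,k,\cdot)$ strictly increases on $(0,p_k^*)$, attains its maximum at $p_k^*$, and strictly decreases on $(p_k^*,\infty)$. Because $\theta(2,\cdot)$ is strictly decreasing in $p$, the condition $n > \theta(2,p)$ amounts to $p > p_2^*$ and $n < \theta(2,p)$ to $p < p_2^*$.

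The first step is a key lemma: $p_k^* \le p_2^*$ for every $k \in \{2,\dots,n-1\}$, strict for $k \ne 2$. Treating $k$ as continuous and differentiating $\theta(k,p_k^*) = n$ implicitly, one finds $\operatorname{sign}(dp_k^*/dk) = \operatorname{sign}(knp_k^* - 4)$ (using the relation $k^2p(1-p) = knp - 2$ on the curve). Substituting the defining relation into $knp=4$ yields a unique critical value $k_c = n/2 + 4/n$, so $p_k^*$ is U-shaped on $[2,n-1]$ and its maximum on the interval is attained at an endpoint. A direct squaring calculation then gives $p_{n-1}^* = 1/(n-1) < p_2^*$ for $n \ge 4$ (with the degenerate equality $p_2^* = p_{n-1}^*$ at $n=3$, where the integer range collapses to $\{2\}$). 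The decreasing direction is now immediate: for $p > p_2^*$ every $J^1_B(n,k,\cdot)$ is strictly decreasing at $p$, and the pointwise minimum of finitely many functions each strictly decreasing at a given point is itself strictly decreasing there (if $a_k > b_k$ for all $k$ then $\min_k a_k > \min_k b_k$, applied at each pair $p_1 < p_2$).

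The increasing direction is more delicate, because for $k \ne 2$ one may have $p_k^* < p < p_2^*$, a subrange on which $J^1_B(n,k,\cdot)$ is decreasing. The plan is to show that on this subrange $k$ is not a minimizer, so it does not affect $G$. Set $f_k(p) := J^1_B(n,k,p) - J^1_B(n,2,p)$. On $(p_k^*,p_2^*)$ the function $J^1_B(n,k,\cdot)$ strictly decreases while $J^1_B(n,2,\cdot)$ strictly increases, so $f_k$ is strictly decreasing there. It therefore suffices to prove the sub-lemma $J^1_B(n,k,p_2^*) \ge J^1_B(n,2,p_2^*)$ for every $k \in \{3,\dots,n-1\}$: given it, $f_k(p) > f_k(p_2^*) \ge 0$ on $(p_k^*,p_2^*)$, so any minimizer $k^*$ at $p < p_2^*$ satisfies $p \le p_{k^*}^*$, whence $J^1_B(n,k^*,\cdot)$ is (weakly) increasing at $p$. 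The same pointwise-minimum argument, applied to strictly increasing functions piecewise, yields strict increase of $G$ on $(0,p_2^*)$.

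The main obstacle is the sub-lemma. The defining relation $2p_2^* + n - 2 = 1/p_2^*$ collapses $J^1_B(n,2,p_2^*)$ into the closed form $-2\log_2(p_2^*)/(n-1)$, and the identity $kp_2^* + n - k = (k-2)(p_2^* - 1) + 1/p_2^*$ likewise simplifies the numerator of $J^1_B(n,k,p_2^*)$. After substituting $n = 1/p_2^* + 2(1-p_2^*)$, the sub-lemma becomes an explicit one-parameter inequality in $p_2^* \in (0,1/2]$ for each admissible $k$. The hard part is verifying this uniformly in $k$; I expect it to follow from standard convexity estimates on the function $x \mapsto x\log_2 x$ (or, equivalently, comparing the binary Shannon entropy evaluated at the two broom configurations), but cleanly executing this final step is where the real work lies.
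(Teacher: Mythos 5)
Your route is genuinely different from the paper's. The paper does not work with the individual maximizers $p_k^*$ at all: it splits $n$-$p$ space into the four regions $R_1$--$R_4$ according to the value of $k^*$, invokes continuity of $p \mapsto \min_k J^1_B(n,k,p)$ to reduce to proving monotonicity region by region, and then in each region shows that \emph{every} relevant $J^1_B(n,k,\cdot)$ is monotone in the required direction (e.g.\ in $R_2$, where $p>\tfrac12$ by Proposition~\ref{cats_p_one_half}, it uses $\theta(k,p)<4/k+k/2< k\le n$ for $k\ge 3$; in $R_4$ it uses $n<\theta(n-1,p)\Leftrightarrow p<1/(n-1)$ together with $J^1_B(n,n-1,1/(n-1))=1$). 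Your envelope argument is more self-contained in spirit and your decreasing half is essentially complete: the key lemma $p_k^*\le p_2^*$ is correct, your implicit-differentiation computation checks out (on the curve $k^2p(1-p)=knp-2$, so $\sign(dp_k^*/dk)=\sign(knp_k^*-4)$, with the unique critical $k_c=n/2+4/n$ and $p_{n-1}^*=1/(n-1)<p_2^*$), modulo spelling out why the single critical point forces the U-shape rather than an inverted one. Once every $J^1_B(n,k,\cdot)$ is strictly decreasing for $p>p_2^*$, the strict decrease of the pointwise minimum follows as you say.

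The genuine gap is the sub-lemma $J^1_B(n,k,p_2^*)\ge J^1_B(n,2,p_2^*)$ for $3\le k\le n-1$, which you acknowledge you have not proved. This is not a routine loose end: it is exactly the assertion that $k^*=2$ on the curve $n=\theta(2,p)$, i.e.\ that this curve lies in the region the paper calls $R_3$ --- which is the same nontrivial structural fact the paper's own argument leans on (there it is deduced in the proof of Corollary~\ref{max_min_J1B} from the inequalities $n>\theta(2,p)$ on $R_1\cup R_2$ and $n<\theta(2,p)$ on $R_4$, which in turn rest on Proposition~\ref{cats_p_one_half} and the characterization of $k^*$ in each region). Note that Proposition~\ref{cats_p_one_half} alone does not close your gap, since it gives $k^*=2$ at $p=\tfrac12$, whereas $p_2^*\le\tfrac12$ with equality only at $n=3$; you would need $k^*=2$ at the smaller value $p_2^*\sim 1/(n-2)$. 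So your increasing half currently either imports the paper's region analysis (defeating the self-contained aim) or requires the convexity/entropy comparison you defer, and until one of those is executed the proof of the increasing direction is incomplete.
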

\begin{proof}
    See Appendix~\ref{min_J1_in_p_proof}.
\end{proof}

\begin{corollary} \label{max_min_J1B}
For all $n \ge 3$, 
    $$
    \min_{2 \le k \le n} J^1_B(n,k,p) \le 
    \frac{2 \left(2 - \log_2 \left(\sqrt{n^2-4n+12} - n + 2 \right) \right)}{n - 1},
    $$
    with equality if and only if $n = \theta(2, p)$.
\end{corollary}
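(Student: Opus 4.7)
The plan is to exploit Proposition~\ref{min_J1_in_p}: the function $p \mapsto \min_{2 \le k \le n} J^1_B(n,k,p)$ is strictly increasing on $(0, p^*)$ and strictly decreasing on $(p^*, \infty)$, where $p^* > 0$ is the unique solution of $n = \theta(2, p^*) = \tfrac{1}{p^*} + 2(1 - p^*)$. Its global maximum over $p > 0$ is therefore attained precisely at $p = p^*$, which immediately yields both the inequality and the criterion for equality ($n = \theta(2, p)$). The remaining work is to evaluate this maximum in closed form.

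First I would solve for $p^*$. The defining equation rearranges to $2p^{*2} + (n-2)p^* - 1 = 0$, whose unique positive root is
\begin{equation*}
p^* = \frac{\sqrt{n^2 - 4n + 12} - (n-2)}{4}.
\end{equation*}
Second, I would use the identity $n = \tfrac{1}{p^*} + 2(1 - p^*)$ to obtain two simplifications that will clear almost everything in sight: $2p^* + n - 2 = \tfrac{1}{p^*}$ and $4p^* + n - 2 = \tfrac{1 + 2p^{*2}}{p^*}$. Since the proof of Proposition~\ref{min_J1_in_p} implies that the inner minimum at $p = p^*$ is realised by $k = 2$ (otherwise the critical point of the outer minimum would be governed by some $\theta(k, p)$ with $k \ne 2$), I would substitute $k = 2$ and these identities into Equation~\ref{J1B}. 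Using $\log_2(1/p^*) = -\log_2 p^*$ and $\log_2(2p^*) = 1 + \log_2 p^*$, the numerator collapses to $-2\log_2 p^* \cdot \tfrac{1 + 2p^{*2}}{p^*}$, the common factor $\tfrac{1 + 2p^{*2}}{p^*}$ cancels, and one is left with
\begin{equation*}
J^1_B(n, 2, p^*) = \frac{-2 \log_2 p^*}{n - 1}.
\end{equation*}

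Finally, substituting the closed form of $p^*$ and rewriting $\log_2 p^* = \log_2(\sqrt{n^2 - 4n + 12} - n + 2) - 2$ produces exactly the bound in the statement. The only step needing care is confirming that $k = 2$ attains the inner minimum at $p = p^*$; this is the main obstacle, but it should be immediate from whatever local analysis underlies the proof of Proposition~\ref{min_J1_in_p} (the critical-point condition there already singles out $k = 2$). Everything else is routine algebra.
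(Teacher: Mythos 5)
Your proposal is correct and follows essentially the same route as the paper: locate the maximum of $p \mapsto \min_k J^1_B(n,k,p)$ at the unique positive root of $n = \theta(2,p)$ via Proposition~\ref{min_J1_in_p}, argue that $k^*=2$ there, and substitute into $J^1_B(n,2,p)$. The one step you flag as needing care — that $k=2$ attains the inner minimum at $p=p^*$ — is settled in the paper by the region argument from the proof of Proposition~\ref{min_J1_in_p} ($n>\theta(2,p)$ throughout $R_1\cup R_2$ and $n<\theta(2,p)$ throughout $R_4$, so $n=\theta(2,p)$ can only occur in $R_3$, where $k^*=2$ by definition); your algebraic simplification is correct and in fact more explicit than the paper's.
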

\begin{proof}
    See Appendix~\ref{max_min_J1B_proof}.
\end{proof}

The curve $n = \theta(2, p)$, where $\min_{2 \le k \le n} J^1_B(n,k,p)$ is maximal, is shown in Figure~\ref{boundaries_plot}B (solid grey curve). The three panels of Figure~\ref{boundaries_plot} together summarize all the main results of this section.


	
\section{Discussion}

The aims of this paper were to explore mathematical properties of the $J^1$ index and to solidify its status as a universal tree balance index. By extending past results and uncovering new connections, we have shown that $J^1$ unifies the notions of tree balance from biology and computer science.
We have set the historical record straight by explaining how the ``average entropy of a tree" definition of \citet{wong_upper_1973} is equivalent to a special case of our $J^1$ definition. Yet whereas, throughout the five decades before we independently rediscovered it, this index was regarded merely as an accessory to defining weight-balanced trees, we instead contend that $J^1$ is useful in its own right as the best index for quantifying tree balance \citep{lemant_robust_2022, noble2022spatial, noble_new_2023}.

While closed-form expressions remain elusive, we have proven the accuracy of simple approximations to the expected values of $J^1$ for the two most important tree generating models. These results put $J^1$ on the same footing as the conventional Sackin and Colless tree balance indices, for which the corresponding expected values have been derived previously \citep{kirkpatrick_searching_1993, blum2006mean, mir_new_2013}. The errors in our approximations are small enough as to be negligible in many practical applications. When this is not the case, statistical methods can employ the maximum and minimum bounds on the true expected values (Figure~\ref{jensenfig}).

We have also investigated minimal values of $J^1$ in the interesting special case of broom trees (in which every internal node except the most distant from the root has outdegree 2). 
{
A motivation for this investigation is that every tree balance index must, by definition, attain its extremal values only on trees that can reasonably be described as extremely balanced or extremely imbalanced. Also, knowing the tree types for which an index attains its maximum and minimum values aids interpretation of the index, by providing reference points to which other trees can be compared.
Given these motivations,} we have focussed on leafy broom trees, which provide us with a relatively simple yet non-trivial special case for investigating minimal $J^1$ values. Broom trees include caterpillar trees, which are the least-balanced bifurcating trees according to popular conventional tree balance indices \citep{fischer_extremal_2021, coronado2020minimum}.

{
We have thus shown that, although caterpillar trees minimize $J^1$ among leafy bifurcating trees with equally sized leaves, they are sub-optimal when outdegrees greater than two are permitted.
The least balanced broom tree in the more general case has a long handle and a large head (containing at least half the leaves). Nevertheless, most broom trees (all but those with extremely large heads) have relatively similar $J^1$ values, which decrease rapidly as the number of leaves increases.
}
{
By relaxing the assumption of equal leaf sizes, we have further shown that a small change in relative leaf sizes can dramatically change the topology of the least balanced broom tree. In particular, if the leaves in the broom head are moderately smaller than those in the handle then the least balanced broom tree is a caterpillar.
}

{
Is it at all problematic that caterpillar trees do not always minimize $J^1$? Does this property make $J^1$ somehow different from other tree balance indices?
We argue that the answer is no in both cases.
It is true that caterpillar trees minimize Sackin's index among all trees with a given number of leaves and no nodes of outdegree one \citep{fischer_extremal_2021}. But this property is meaningless because it is meaningless to compare Sackin's index values for trees with different outdegree distributions \citep{Shao1990, lemant_robust_2022}. Otherwise, for example, the symmetrical bifurcating tree on four leaves would be considered less balanced than the four-leaved star tree. The same restriction applies to the total cophenetic index \citep{mir_new_2013}. The other popular conventional tree balance index, Colless' index, is defined only for bifurcating trees \citep{Colless1982}. Hence there is no prior ``correct" answer as to whether a given non-bifurcating broom tree is more or less balanced than the caterpillar with the same number of leaves. Our index provides a reasonable, consistent solution to this problem.}

An important goal for future research is to obtain the expected values of $J^1$ for more complicated yet more biologically realistic models. These include models that generate non-leafy trees, such as the clone trees that are central to cancer evolution research \citep{noble2022spatial, feder2024detecting}. Our recent generalization of $J^1$ to account for branch lengths \citep{noble_new_2023} motivates the study of models that generate non-uniform branch lengths. It should also be possible to investigate the expected values of extensions of $J^1$ that apply to phylogenetic networks, as was recently done for Sackin's index \citep{fuchs2025sackin}.

The methods we have applied here to finding least-balanced broom trees can be readily adapted to other special tree topologies. In particular, given that Conjecture~\ref{conj_brooms} applies only in the case of equally-sized leaves, it will be worthwhile to examine minimal $J^1$ values when the node size distribution is such that broom trees clearly do not minimize $J^1$. For example, every broom tree in region $R_4$ of Figure~\ref{boundaries_plot} has a higher $J^1$ value than the leafy star tree with the same number of leaves and the same leaf size distribution.

In conclusion, our results strengthen the case for $J^1$ as the most useful cross-disciplinary index of tree balance and provide a firm foundation for using $J^1$ instead of conventional indices to compare and categorize empirical trees.

\section*{Data availability statement}

All data created for this study can be readily reproduced using our mathematical methods, with the exception of the data presented in Tables~\ref{table_values} and ~\ref{table_values_unif}. Efficient R code for calculating $J^1$ is available at https://zenodo.org/records/5873857.

\section*{Acknowledgements}
This work was supported by a London Mathematical Society Research in Pairs award (reference 42320) to R.N. and Y.V. and by funding from the National Cancer Institute of the National Institutes of Health under Award Number U54CA217376 to R.N. and V.M. The content is solely the responsibility of the authors and does not necessarily represent the official views of the National Institutes of Health.
The authors thank Panos Giannopoulos for helpful discussions about tree balance in computer science and Holden Lee for advice on methods of approximating the inverse Sackin index.
	
\bibliography{references}


\begin{appendices}
	\section{Additional proofs and derivations}\label{proof-app}

    \subsection{Proof of Proposition~\ref{jensen_prop}}\label{jensen_proof}

    \begin{proof}
 
	Our proof of Proposition~\ref{jensen_prop} relies on a recent result of \citet{liao_sharpening_2017}.
	\begin{theorem}[\citet{liao_sharpening_2017}]\label{jensen_thm}
		Let $X$ be a one-dimensional random variable with mean $\mu$, and $P(X\in(a,b))=1$, where $\infty\leq a<b\leq \infty$. If $f(x)$ is a twice differentiable function on $(a,b)$, and
		\begin{equation*}
			h(x;\nu) = \frac{f(x)-f(\nu)}{(x-\nu)^2} - \frac{f'(\nu)}{x-\nu},
		\end{equation*}
		then
		\begin{equation}
			\inf_{x\in(a,b)}\{h(x;\mu)\}\mathrm{Var}(X) \leq \mathbb{E}[f(X)] - f(\mathbb{E}[X]) \leq \sup_{x\in(a,b)}\{ h(x;\mu) \}\mathrm{Var}(X).
		\end{equation}
	\end{theorem}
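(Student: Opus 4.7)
The plan is to reduce the claimed two-sided bound to a single exact identity and then apply monotonicity of expectation. The first step is to read off directly from the definition of $h(x;\nu)$, by clearing the denominator, the algebraic identity
\[
h(x;\nu)\,(x-\nu)^2 \;=\; f(x) - f(\nu) - f'(\nu)(x-\nu),
\]
valid for every $x \ne \nu$ in $(a,b)$. At $x = \nu$ both sides of the identity vanish regardless of the value assigned to $h(\nu;\nu)$, so the identity in fact extends to all of $(a,b)$; a Taylor expansion of $f$ around $\nu$ further shows that $h(\cdot;\nu)$ has the removable singularity $h(\nu;\nu) := f''(\nu)/2$, ensuring that $h(\cdot;\mu)$ really is a well-defined function on $(a,b)$ whose infimum and supremum make sense.

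Next I would specialize the identity to $\nu = \mu := \mathbb{E}[X]$, evaluate at the random point $X$, and take expectations. Because $\mathbb{E}[X - \mu] = 0$, the linear term drops out, yielding the key exact formula
\[
\mathbb{E}[f(X)] - f(\mu) \;=\; \mathbb{E}\!\left[h(X;\mu)\,(X-\mu)^2\right].
\]
This identity turns the Jensen gap into the expectation of a non-negative factor $(X-\mu)^2$ times $h(X;\mu)$. Multiplying the pointwise bounds $\inf_{x \in (a,b)} h(x;\mu) \le h(X;\mu) \le \sup_{x \in (a,b)} h(x;\mu)$ by the non-negative quantity $(X-\mu)^2$ preserves the inequalities almost surely, and taking expectations with $\mathrm{Var}(X) = \mathbb{E}[(X-\mu)^2]$ gives exactly the two-sided bound in the theorem.

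The argument is essentially a one-line identity followed by monotonicity of expectation, so I do not anticipate a serious obstacle; the only mild subtlety is the apparent singularity of $h(\cdot;\mu)$ at $x = \mu$, handled as above by noting that both sides of the underlying identity vanish at that point. If desired, one could phrase the entire proof without reference to the removable singularity at all by restricting the identity to $\{X \ne \mu\}$ and adding the (zero) contribution from $\{X = \mu\}$ separately. In either presentation the finite-variance assumption on $X$ is what makes the bounds non-vacuous, and the twice-differentiability of $f$ is used only to identify the continuous extension value $f''(\mu)/2$.
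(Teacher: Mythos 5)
Your proof is correct. Note that the paper itself gives no proof of this statement --- it is imported verbatim as a cited result of \citet{liao_sharpening_2017} --- and your argument (clearing the denominator to get the exact identity $\mathbb{E}[f(X)]-f(\mu)=\mathbb{E}[h(X;\mu)(X-\mu)^2]$, handling the removable singularity at $x=\mu$, then sandwiching $h(X;\mu)$ between its infimum and supremum against the non-negative weight $(X-\mu)^2$) is precisely the standard proof given in that reference, so there is nothing to reconcile.
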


	\subsubsection*{Part i}
 
		Let $\mu_Y$ be the expected value of the Sackin index under the Yule process for trees on $n$ leaves, let $f(x)=\frac{n \log_2 n}{x}$, and define
		\begin{equation}\label{hxmuY}
			h(x;\mu_Y) = \frac{f(x)-f(\mu_Y)}{(x-\mu_Y)^2} - \frac{f'(\mu_Y)}{x-\mu_Y} = \frac{n \log_2 n}{x\mu_Y^2}.
		\end{equation}
		Theorem~\ref{jensen_thm} then implies
		\begin{equation}\label{JensenApp}
			\frac{n\log_2n}{\frac{(n-1)(n+2)}{2}\mu_Y^2} \text{Var}_Y(I_S) \leq \mathbb{E}[J^1] - \frac{n\log_2n}{\mathbb{E}[I_S]} \leq \frac{n\log_2n}{\mu_Y^2n\log_2n}\text{Var}_Y(I_S),
		\end{equation}
		where the supremum and infimum of $h(x,\mu)$ are the extremal values of the Sackin index on bifurcating trees \citep{fischer_extremal_2021}. The expectation of the Sackin index under the Yule process is given in Equation~\ref{yule_exp_sackin}, and its variance as derived by \citet{cardona_exact_2012} is
		\begin{equation}\label{yulevarIS}
			\text{Var}_Y(I_S) = 7n^2 - 4n^2\sum_{i=1}^n\frac{1}{i^2} - 2n\sum_{i=1}^{n}\frac{1}{i} - n.
		\end{equation}
		Substituting these expressions into Equation~\ref{JensenApp}, we obtain the limits
		\begin{align*}
			\frac{n\log_2n}{\frac{(n-1)(n+2)}{2}\mu_Y^2} \text{Var}_Y(I_S) & \stackrel{n\to\infty}{\sim} \frac{\log_2 n\left(7n^2 - 4n^2\sum_{i=1}^n\frac{1}{i^2}-2n\sum_{i=1}^n\frac{1}{i}-n\right)}{2n^3\left(\sum_{i=2}^n\frac{1}{i}\right)^2} \\
		& \stackrel{n\to\infty}{\sim} \frac{\left(7-\frac{2\pi^2}{3}\right)n^2\log_2 n}{2n^3(\ln n)^2}\\
        & \stackrel{n\to\infty}{\sim} \frac{21-2\pi^2}{6n \ln 2 \ln n}
        \to 0
		\end{align*}
		for the lower bound on the gap, and
		\begin{align*}
			\frac{n\log_2n}{\mu_Y^2n\log_2n}\text{Var}_Y(I_S) & \stackrel{n\to\infty}{\sim} \frac{7n^2 - 4n^2\sum_{i=2}^n\frac{1}{i^2}-2n\sum_{i=2}^n\frac{1}{i}-n}{4n^2\left(\sum_{i=2}^n\frac{1}{i}\right)^2} \\
   & \stackrel{n\to\infty}{\sim} \frac{\left(7-4\left(\frac{\pi^2}{6}-1\right)\right)n^2}{4n^2(\ln n)^2}\\
   & \stackrel{n\to\infty}{\sim} \frac{7-4\left(\frac{\pi^2}{6}-1\right)}{4(\ln n)^2}\to 0  
		\end{align*}
		for the upper bound on the gap. The upper bound reaches a maximum of approximately $0.00790$ at $n=13$ and the lower bound reaches a maximum of approximately $0.00499$ at $n=8$.
	
\subsubsection*{Part ii}
 
		Let $\mu_U$ be the expected value of the Sackin's index under the uniform model for trees on $n$ leaves, and $f(x)=\frac{n \log_2 n}{x}$, and define
		\begin{equation}\label{hxmuU}
			h(x;\mu_U) = \frac{f(x)-f(\mu_U)}{(x-\mu_U)^2} - \frac{f'(\mu_U)}{x-\mu_U} = \frac{n \log_2 n}{x\mu_U^2}.
		\end{equation}
		Theorem~\ref{jensen_thm} then implies
		\begin{equation}\label{JensenAppU}
			\frac{n\log_2n}{\frac{(n-1)(n+2)}{2}\mu_U^2} \text{Var}_U(I_S) \leq \mathbb{E}[J^1] - \frac{n\log_2n}{\mathbb{E}[I_S]} \leq \frac{n\log_2n}{\mu_U^2n\log_2n}\text{Var}_U(I_S),
		\end{equation}
		analogously to Equation~\ref{JensenApp}. The expectation and variance of Sackin's index under the uniform model are \citep{cardona_exact_2012}:
		\begin{gather}
			\mathbb{E}_U(I_S) = \frac{4^{n-1}n!(n-1)!}{(2n-2)!}-n, \\
			\mathbb{V}_U(I_S) = n\frac{10n^2 - 3n - 1}{3} - \frac{(n+1)n}{2}\frac{(2n-2)!!}{(2n-3)!!}-n^2\left(\frac{(2n-2)!!}{(2n-3)!!}\right)^2.
		\end{gather}
		In the limit $n\to\infty$, we obtain:

        \begin{align}
            \mathbb{E}_U(I_S)\stackrel{n\to\infty}{\sim} & \sqrt{\pi}n^{3/2}\nonumber\\
            \mathbb{V}_U(I_S)\stackrel{n\to\infty}{\sim} & \left(\frac{10}{3}-\pi\right)n^3.\nonumber
        \end{align}
        
  
  It follows that the lower bound of $\mathbb{E}_U(J^1)-\frac{n\log_2n}{\mathbb{E}_U(I_S)}$ goes to $0$ as fast as $\frac{\ln n}{n}$, while the upper bound increases very slowly to the limit $\frac{10}{3\pi}-1 \approx 0.061$ as $n\to\infty$, as a consequence of high variance in the uniform model (Figure~\ref{var_fig}). 
	\end{proof}

	\begin{figure}
		\centering
		\includegraphics[width=0.65\textwidth]{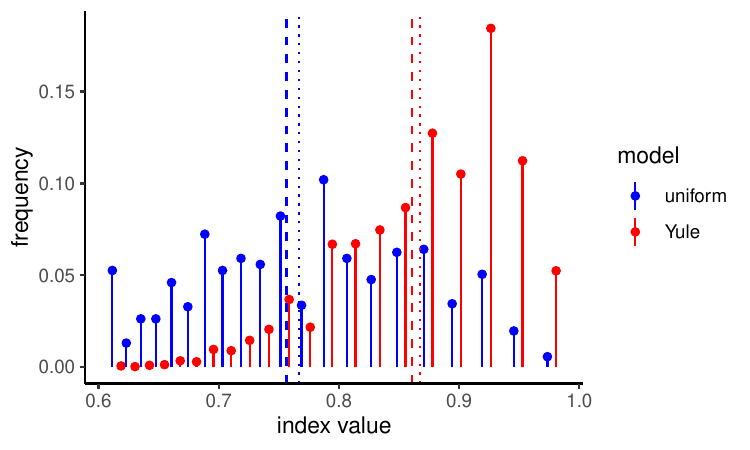}
		\caption{Exact distributions of $J^1$ values for $10$-leaf trees generated under the Yule process (red) and the uniform model (blue), illustrating that the latter has higher variance. Dashed lines represent $\mathbb E (J^1)$ and dotted lines $\frac{n\log_2n}{\mathbb{E}(I_S)}$.}
		\label{var_fig}
	\end{figure}


   \subsection{Approximations of $\mathbb{E}[J^1]$}\label{app:approx_j1}

    Let $f(x)=\frac{1}{x}$. We are interested in an approximation of such a function in terms of polynomial-order terms in $x$, and hence it is natural to consider its $k$th-order Taylor approximation about $\mu>0$:
\[f(x)=\sum_{i=0}^k (-1)^{i}\frac{1}{\mu^{i+1}}(x-\mu)^i+R_k(x).\]
Here $R_k(x)$ is the error:

\begin{align}
    R_k(x)=\frac{f^{(k+1)}(z(x))}{(k+1)!}(x-\mu)^{k+1}=(-1)^{k+1}\frac{(x-\mu)^{k+1}}{z^{k+2}(x)}\label{eq:error}
\end{align}
where $z(x)$ is a value between $x$ and $\mu$. Composing $f$ with the random variable $I_S$ representing the Sackin index of a randomly generated tree with $n$ leaves, letting $\mu=\mathbb{E}[I_S]$, and taking expectations, we obtain
\begin{align}
\mathbb{E}\left[\frac{J^1}{n\log_2 n}\right]=\mathbb{E}[f(I_S)]=\frac{1}{\mathbb{E}[I_S]}+\sum_{i=2}^k (-1)^{i}\frac{1}{\mu^{i+1}}\mathbb{E}[(I_S-\mu)^i]+\mathbb{E}[R_k(I_S)],\label{ref:taylor_general}
\end{align}
where the first equality is due to the leafy-tree identity.
Thus, the goal is to examine the asymptotic behavior of $\mathbb{E}[R_k(I_S)]$ as $n$ goes to infinity. To generate random trees, we can examine either the Yule or the uniform process. We denote $I_{S,k}$ the random variable corresponding to the Sackin index of either a Yule or uniform tree with $k$ leaves. The subscript $k$ is omitted when the number of leaves is clear from context.

\subsubsection{Yule}\label{sec:yule}
Under the Yule process, $I_S$ satisfies the following recurrence \citep{blum2005_statistic}:
\begin{align}
I_{S,n}\stackrel{d}{=}I_{S,V}+I_{S,n-V}+n,
\label{eq:rec}\end{align}
where $V$ is a discrete uniform random variable with support $\{1,\dots,n-1\}$, 
$I_{S,V},I_{S,n-V}$ are independent, conditional on $V$, and $\stackrel{d}{=}$ denotes equality in distribution. In fact it is easy to generate simulations of $I_S$ using Equation~\ref{eq:rec}.

It is known that the distribution of $\frac{I_{S}-\mathbb{E}[I_{S}]}{n}$ is the same as that of $\frac{Q_{n}-\mathbb{E}[Q_{n}]}{n}$, where $Q_{n}$ is the number of steps required for a Quicksort of a random $n$-length string. This follows from $I_{S}\stackrel{d}{=}Q_{n}+2(n-1)$ \citep{iliopoulos2015}. Additionally, $\frac{Q_{n}-\mathbb{E}[Q_{n}]}{n}$ converges in distribution to a random variable $Y$, which is 
characterized by \citet{rosler1991_quicksort}.
An even stronger convergence is true, where (absolute) moments converge to (absolute) moments of this limiting distribution (Wasserstein $d_p$ convergence, for any $p$) \citep{rosler1991_quicksort}. Thus: 
\begin{align}
    \mathbb{E}[|I_S-\mathbb{E}[I_S]|^p]\sim \mathbb{E}[|Y|^p]n^p,\nonumber\\
    \mathbb{E}[(I_S-\mathbb{E}[I_S])^p]\sim \mathbb{E}[Y^p]n^p.\label{eq:centered_moment_yule}
\end{align}

Furthermore, recalling that the expectation of $I_S$ is equivalent to $2 n \ln n$ and from rearranging (using continuous mapping) we note that:
\begin{align}
    \frac{I_S-\mathbb{E}[I_S]}{n}\stackrel{d}{\sim} Y\Rightarrow \frac{I_S}{\mathbb{E}[I_S]}\stackrel{p}{\sim} 1\Rightarrow J^1=\frac{n\log_2 n}{I_S}\stackrel{p}{\sim} \frac{1}{2\ln 2}.\label{eq:asymp_yule}
\end{align}
So, asymptotically, $J^1\approx \frac{1}{2\ln 2} \approx 0.72$  (in probability, which we also see from simulations).

Now, we characterize the error term associated with the 2nd order Taylor approximation ($\mathbb{E}[R_k(I_S)]$ for $k=2$). Using (\ref{eq:error}), the error can be written as $\mathbb{E}\left[\frac{(I_S-\mathbb{E}[I_S])^3}{z(I_S)^4}\right]$. 
Since $z(I_s)\in [\min\{\mathbb{E}[I_S],I_S\},\max\{\mathbb{E}[I_S],I_S\}]$ and by continuous mapping $\frac{\min\{\mathbb{E}[I_S],I_S\}}{\mathbb{E}[I_S]}\stackrel{p}{\sim}1,\frac{\max\{\mathbb{E}[I_S],I_S\}}{\mathbb{E}[I_S]}\stackrel{p}{\sim}1$ (using (\ref{eq:asymp_yule})), so $z(I_S) / \mathbb{E}[I_S]\stackrel{p}{\sim} 1$. This motivates (but does not yet show) the following asymptotic approximation to the error term:
\[\frac{\mathbb{E}\left[\frac{(I_S-\mathbb{E}[I_S])^3}{z(I_S)^4}\right]}{\mathbb{E}\left[\frac{(I_S-\mathbb{E}[I_S])^3}{\mathbb{E}[I_S]^4}\right]}\to 1\]

Showing the above is equivalent to showing:
\[\frac{\mathbb{E}\left[(I_S-\mathbb{E}[I_S])^3\left(\frac{\mathbb{E}[I_S]^4}{z(I_S)^4}-1\right)\right]}{\mathbb{E}[(I_S-\mathbb{E}[I_S])^3]}\to 0.\]

Using Cauchy-Schwarz:
\begin{align*}
\left|\frac{\mathbb{E}\left[(I_S-\mathbb{E}[I_S])^3\left(\frac{\mathbb{E}[I_S]^4}{z(I_S)^4}-1\right)\right]}{\mathbb{E}[(I_S-\mathbb{E}[I_S])^3]}\right|
&\leq \frac{\sqrt{\mathbb{E}\left[(I_S-\mathbb{E}[I_S])^6\right]\mathbb{E}\left[\left(\frac{\mathbb{E}[I_S]^4}{z(I_S)^4}-1\right)^2\right]}}{|\mathbb{E}[(I_S-\mathbb{E}[I_S])^3]|} \\
&\sim \frac{\sqrt{\mathbb{E}[Y^6]}}{|\mathbb{E}[Y^3]|}\sqrt{\mathbb{E}\left[\left(\frac{\mathbb{E}[I_S]^4}{z(I_S)^4}-1\right)^2\right]}.
\end{align*}

It then suffices to show $\mathbb{E}\left[\left(\frac{\mathbb{E}[I_S]^4}{z(I_S)^4}-1\right)^2\right]\to 0$. Define the set $$
A=\{\omega:I_S(\omega)\geq \mathbb{E}[I_S]/2\}.
$$
and let $A^c$ denote its complement. Noting that:
\[\mathbb{E}\left[\left(\frac{\mathbb{E}[I_S]^4}{z(I_S)^4}-1\right)^2\right]=\mathbb{E}\left[\left(\frac{\mathbb{E}[I_S]^4}{z(I_S)^4}-1\right)^2\mathbf1_A\right]+\mathbb{E}\left[\left(\frac{\mathbb{E}[I_S]^4}{z(I_S)^4}-1\right)^2\mathbf{1}_{A^c}\right]\]
it suffices to prove that each summand converges to 0. The 
strategy to control the first term is to show that on the event $A$, the expectation decays to $0$ as $n\to\infty$. To control the second term, we show that the probability of $A^c$ decays to $0$ sufficiently quickly as $n\to\infty$.  

On set $A$, $\frac{\mathbb{E}[I_S]}{z(I_S)}\leq 2$, so
\begin{align*}
\left|\frac{\mathbb{E}[I_S]^4}{z(I_S)^4}-1\right| 
&= \left|\frac{\mathbb{E}[I_S]} {z(I_S)}-1\right|\left|\left(\frac{\mathbb{E}[I_S]}{z(I_S)}\right)^3+\left(\frac{\mathbb{E}[I_S]}{z(I_S)}\right)^2+\frac{\mathbb{E}[I_S]}{z(I_S)}+1\right| \\
&\leq 15\left|\frac{\mathbb{E}[I_S]}{z(I_S)}-1\right| 
=15\left|\frac{\mathbb{E}[I_S]-z(I_S)}{z(I_S)}\right|\\
&\leq 15\left|\frac{\mathbb{E}[I_S]-I_S}{z(I_S)}\right|\leq 30\frac{|I_S-\mathbb{E}[I_S]|}{\mathbb{E}[I_S]}.
\end{align*}
Thus:
\[\mathbb{E}\left[\left(\frac{\mathbb{E}[I_S]^4}{z(I_S)^4}-1\right)^2\cdot \mathbf{1}_{A}\right]\leq \frac{900}{\mathbb{E}[I_S]^2}\mathbb{E}\left[|I_S-\mathbb{E}[I_S]|^2\right]= \mathcal{O}\left(\frac{1}{(\ln n)^2}\right)\to 0.\]

On the complement of set $A$ (that is, on $A^c$), $I_S \leq \mathbb{E}[I_S]/2 \leq \mathbb{E}[I_S]$ so $z(I_S) \leq \mathbb{E}[I_S]$. Moreover, due to the trivial bound $I_S\geq n$, we have $z(I_S)\geq n$. Hence
\[\left|\frac{\mathbb{E}[I_S]^4}{z(I_S)^4}-1\right| = \frac{\mathbb{E}[I_S]^4}{z(I_S)^4}-1  \leq \frac{\mathbb{E}[I_S]^4}{z(I_S)^4}\leq \widetilde{C}(\ln n)^4\]
for some constant $\widetilde{C}$. Notice further that
\begin{align*}
\mathbb{P}\left(I_S<\frac{\mathbb{E}[I_S]}{2}\right)
&\leq \mathbb{P}\left(|I_S-\mathbb{E} [I_S]|>\frac{\mathbb{E}[I_S]}{2}\right) 
=\mathbb{P}\left(|I_S-\mathbb{E}[I_S]|^9>\left(\frac{\mathbb{E}[I_S]}{2}\right)^9\right) \\
&\leq \frac{2^9\mathbb{E}[|I_S-\mathbb{E}[I_S]|^9]}{\mathbb{E}[I_S]^9}
= \mathcal{O}\left(\frac{1}{(\ln n)^9}\right).
\end{align*}
Hence
\begin{align*}
\mathbb{E}\left[\left(\frac{\mathbb{E}[I_S]^4}{z(I_S)^4}-1\right)^2\cdot \mathbf{1}_{A^c}\right]
&\leq \widetilde{C}^2(\ln n)^8\mathbb{P}(A^c) \\
&\leq \frac{2^9\mathbb{E}[|I_S-\mathbb{E}[I_S]|^9]}{\mathbb{E}[I_S]^9}\widetilde{C}^2(\ln n)^8
= \mathcal{O}\left(\frac{1}{\ln n}\right)
\to 0.
\end{align*}

Finally then
\[
\mathbb{E}\left[\left(\frac{\mathbb{E}[I_S]^4}{z(I_S)^4}-1\right)^2\right]
=\mathbb{E}\left[\left(\frac{\mathbb{E}[I_S]^4}{z(I_S)^4}-1\right)^2\mathbf{1}_A\right]+\mathbb{E}\left[\left(\frac{\mathbb{E}[I_S]^4}{z(I_S)^4}-1\right)^2\mathbf{1}_{A^c}\right]\to 0.
\]

So, we can indeed say that
\begin{align}
    \mathbb{E}\left[\frac{(I_S-\mathbb{E}[I_S])^3}{z(I_S)^4}\right]\sim \mathbb{E}\left[\frac{(I_S-\mathbb{E}[I_S])^3}{\mathbb{E}[I_S]^4}\right].\label{eq:replace_error}
\end{align}

Thus using Eqs. \ref{eq:replace_error} and \ref{eq:centered_moment_yule}:

\[\mathbb{E}[n\log_2 n\cdot R_2(I_S)] \sim\frac{n\log_2 n}{(2n\ln n)^4}\mathbb{E}[Y^3]n^3=\frac{\mathbb{E}[Y^3]}{16\ln 2}\frac{1}{(\ln n)^3}.\]

Expressions for moments of $Y$ are given, in a rather cumbersome recursive form, by \citet{rosler1991_quicksort}. In general, for a $k$th order Taylor approximation, following the same steps as above, 
$$
\mathbb{E}[(n\log_2 n)R_k(I_S)] 
\sim \frac{\mathbb{E}[Y^{k+1}]}{2^{k+2}\ln 2(\ln n)^{k+1}}.
$$
Since $\mathbb{E}[Y^2]$ is simply the leading term coefficient of $\mathbb{V}[I_S]$, we verify in Figure \ref{jensenfig} that the theoretical error for the first-order approximation is close to the empirical error.




\subsubsection*{Full $m$-ary generalized Yule trees}\label{Yule_m-ary}

Although $m$-ary trees are less prevalent than bifurcating trees in the evolutionary literature, we note that the above procedure can still be easily generalized as follows.

Consider a generalized Yule process in which a leaf is selected uniformly at random and replaced by a star subtree of outdegree $m$. For general $m$-ary trees generated under this generalized Yule process, we have:
\begin{align*}
    &\mathbb{E}[I_S] = (1+\theta)n\left(\gamma(i+\theta+1)-\gamma(\theta+1)\right)\\
    &\mathbb{V}[I_S] =n^2\left((2+\theta)\frac{i}{i+\theta}-(1+\theta)^2(\gamma'(\theta+1)-\gamma'(i+\theta+1))- \right. \\
    &\qquad\qquad\qquad\left. \frac{1}{i+\theta}(\gamma(i+\theta+1)-\gamma(\theta+1))\right)
\end{align*}

where $\theta=\frac{1}{m-1}$, $i$ is the number of internal nodes, which satisfies $n=(m-1)i+1$ and hence $i = (n-1)/(m-1)$, and $\gamma$ is the digamma function 
\citep{halton1989}. 
Thus, $\mathbb{E}[I_S]\sim (1+\theta)\cdot n\ln n$ and $\mathbb{V}[I_S]\sim C_2\cdot n^2$. Therefore:
\[
\mathbb{P}\left(\left|\frac{I_S}{\mathbb{E}[I_S]}-1\right|>\varepsilon\right)=\mathbb{P}(|I_S-\mathbb{E}[I_S]|>\mathbb{E}[I_S]\varepsilon)\leq \frac{\mathbb{V}[I_S]}{(\mathbb{E}[I_S]\varepsilon)^2}\rightarrow 0.
\]
So $\frac{I_S}{\mathbb{E}[I_S]}\stackrel{p}{\sim} 1$. Using the leafy tree identity (Proposition \ref{proposition6}) and continuous mapping, 
this implies that $J^1\stackrel{p}{\sim} \frac{1}{(1+\theta)\ln m}$. Thus, we may apply similar analysis to above (Section \ref{sec:yule}) to show that a $k$th order Taylor approximation's error would decay at rate $1/(\ln n)^{(k+1)}$.

\subsubsection{Uniform}

Under this model, each tree of $n$ leaves is as likely as any other. It is important to note that the uniform model selects higher Sackin-index trees than the Yule model. 

Let $X_n=\frac{I_{S,n}}{n^{3/2}}$. Under the uniform model, $\lim_{n\rightarrow\infty}\mathbb{E}[X_n^r]= \mathbb{E}[A^r]$ \citep{takacs_uniform}, where $A$ is the Airy distribution and $r\in \mathbb{Z}^+$. Such moments are quite simple to obtain:
\[\mathbb{E}[A^r]=\sqrt{8}^r n^{3r/2}M_r\]
where
\[M_r=K_r\cdot \frac{4\sqrt{\pi}r!}{\Gamma\left(\frac{3r-1}{2}\right)2^{r/2}},K_r=\frac{3r-4}{4}K_{r-1}+\sum_{j=1}^{r-1} K_jK_{r-j}\]
and $K_0=-\frac{1}{2}$ \citep{takacs_uniform}.

We have exact expressions for the mean and variance of $I_S$ under the uniform model, but the above can be used to generate $k$th-order Taylor approximations of $\mathbb{E}[J^1]$ for $k\geq 3$ using asymptotic approximations of moments if desired. Further, note that $\mathbb{E}[A^{-1}]=\frac{\sqrt{\pi}}{4}\left(3-16\frac{\pi^3\sqrt{3}}{\Gamma(1/3)^6}\right)$ \citep{flajolet2001}. Thus, conjecturing on the convergence of the inverse moment $\mathbb{E}[X_n^{-1}]\rightarrow \mathbb{E}[A^{-1}]$, we obtain the following large $n$ approximation:
\[\mathbb{E}[J^1]\sim \mathbb{E}[A^{-1}]n^{-1/2}\log_2 n,\]
which is shown in Figure $\ref{jensenfig}$.












\subsection{Derivation of Equation~\ref{J1Tb}}\label{J1Tb_proof}
 
        For brevity we present the derivation in the case of equally sized leaves. The derivation for general $p > 0$ is similar.
		\begin{align*}
			J^1_B(n, k, 1) & = \frac{1}{\sum_{l=k}^{n}l} \sum_{i\in\tilde{V}} S_i^{*}\sum_{j\in C(i)}W_{ij}^{1}                                             \\
			         & = \frac{-2}{(n+k)(n-k+1)}\sum_{i \in \tilde{V}} S_i^*\sum_{j\in C(i)} \frac{S_j}{S_i^*}\log_{d^+(i)} \frac{S_j}{S_i^*}         \\
			         & =\frac{-2}{(n+k)(n-k+1)}\left(\sum_{\substack{i \in \tilde{V}                                                                  \\d^+(i)=2}} S_i^*\sum_{j\in C(i)} \frac{S_j}{S_i^*}\log_2\frac{S_j}{S_i^*}+k\cdot k\cdot\frac{1}{k}\log_k \frac{1}{k}\right)\\
			         & = \frac{-2}{(n+k)(n-k+1)}\left(\sum_{\substack{i \in \tilde{V}                                                                 \\d^+(i)=2}} S_i\left(\frac{S_i-1}{S_i}\log_2\frac{S_i-1}{S_i}+\frac{1}{S_i}\log_2\frac{1}{S_i}\right)-k\right)\\
			         & = \frac{2}{(n+k)(n-k+1)}\left( \sum_{i=k+1}^{n} i\left( \frac{i-1}{i}\log_2\frac{i}{i-1}+\frac{1}{i}\log_2 i \right)+k \right) \\
			         & = \frac{2}{(n+k)(n-k+1)}\left(  \sum_{i=k+1}^{n}\left( (i-1)\log_2\frac{i}{i-1}+\log_2 i \right) +k \right)                    \\
			         & = \frac{2}{(n+k)(n-k+1)}\left( \log_2\frac{n^n k!}{k^k n!}+\log_2\frac{n!}{k!}+k \right)                                        \\
			         & = \frac{2}{(n+k)(n-k+1)}\left( \log_2\frac{n^n}{k^k}+k \right)                                                                 \\
			         & = \frac{2 (n \log_2 n - k \log_2 k + k)}{(n+k)(n-k+1)}.
		\end{align*}

\subsection{Expanded explanation as to why caterpillars are not always the least balanced broom trees~\label{app:catexp}}

Let $T$ be the leafy caterpillar tree with five leaves, each of size 1, and number its four internal nodes from the deepest up to the root, starting at 1. The balance score of node $i$ is then
\begin{align*}
W_i &= -\frac{i}{i+1} \log_2 \frac{i}{i+1} - \frac{1}{i+1} \log_2 \frac{1}{i+1} \\
&= \log_2 (i+1) - \frac{i}{i+1} \log_2 i.
\end{align*}
Thus $W_1 = 1$ (because node 1 is the root of the two-leaf broom head), $W_2 \approx 0.92, W_3 \approx 0.81$, and $W_4 \approx 0.72$. The weight assigned to node $i$ is $S_i = i + 1$. Hence
$$
J^1(T) = \frac{5W_4 + 4W_3 + 3W_2 + 2W_1}{5+4+3+2} \approx 0.8293.
$$
This result can be checked using the general formula for caterpillar trees with $n = 5$:
$$
J^1(T) = \frac{2 n \log_2 n}{(n-1)(n+2)} \approx 0.8293.
$$

Now suppose we modify $T$ by removing the  second lowest internal node (node 2) and reattaching its child leaf to the broom head, to create tree $T'$. Since node 2 was the second most balanced node in $T$, its removal decreases $J^1$. But the removal of node 2 also decreases the normalizing factor of $J^1$ (here equal to Sackin's index) from 14 to 12, which increases $J^1$. Nodes 3 and 4 have the same weights and balance scores as before but the weight of node 1 (the root of the broom head, which remains maximally balanced) increases from 2 to 3. Hence
$$
J^1(T') = \frac{5W_4 + 4W_3 + 3W_1}{5+4+3} \approx 0.8212.
$$

In summary, we have two factors that increase $J^1$ (decreasing the denominator and assigning more weight to $W_1=1$ in the numerator) and one that decreases $J^1$ (removing $W_2$ from the numerator). The latter effect dominates in this example and in all other cases where the leaf count is greater than 4 (Proposition~\ref{cat_prop}).

More generally, let
$$
A = \frac{1}{q_n} \left( n W_{n-1} + \dots + 4 W_3 + 3W_2 + 2W_1\right)
$$
denote the $J^1$ value of the caterpillar tree with $n > 2$ leaves (so $n-1$ internal nodes), where $q_n = n + \dots 4 + 3 + 2$, and let 
$$
B = \frac{1}{q_n - 2} \left( n W_{n-1} + \dots + 4 W_3 + 3W_1\right)
$$
denote the $J^1$ value of the broom tree created by removing the second lowest internal node and reattaching its child leaf to the broom head. Let $C = (3W_2 - W_1)/2$.
Then 
\[
B-A=  \frac{2(A-C)}{q_n-2},
\] 
where $q_n-2 > 0$.

Thus, if $A < C$ (that is, if the balance index of the caterpillar tree is less than the fixed quantity $C$) then $B < A$ (that is, the non-caterpillar broom tree is less balanced than the caterpillar). But $A$ decreases monotonically with $n$ (the caterpillar becomes less balanced as it grows). Therefore, once $A$ becomes smaller than $C$ (when $n = 4$), it remains smaller for all larger values of $n$.

\subsection{Proof of Proposition~\ref{largebrooms}}\label{largebrooms_proof}

\subsubsection{Minimum}\label{app:minimum}

We first prove the formulas for the equivalent of $\min_{2 \leq k \leq n} J_B^1(n, k, p)$ as $n \to +\infty$. 
Recall that, letting $r = k/n$: 
 \begin{equation}
 \label{eq:JB2} 
 J_B^1(n, k, p) =  2 \frac{(1-r)\log_2 n + (1- r + pr) \log_2 (1 - r + pr) - pr \log_2 pr + pr}{(1 + (2p- 1)r)(n(1- r) + 1)}.
 \end{equation}
Consider an arbitrary sequence $(k_n)$ with $2 \leq k_n \leq n$. Let $r_n = k_n/n$. Assume first that $(1-r_n) \ln n \to +\infty$. Then $(1-r_n) n\to +\infty$, and it follows from \eqref{eq:JB2} that: 
\begin{equation}
\label{eq:JB3}
J_B^1(n, k_n, p) \sim \frac{2}{1 + (2p-1) r_n} \times \frac{\log_2 n}{n}. 
\end{equation} 
In particular, if $r_n \to r < 1$ (which implies that $(1-r_n) \ln n \to +\infty$), then:
\begin{equation}
\label{eq:JB4}
J_B^1(n, k_n, p) \sim \frac{2}{1 + (2p-1) r} \times \frac{\log_2 n}{n}.
\end{equation}
From now on, let $(k_n)$ be an optimal sequence and $r_n = k_n/n$. Consider three cases: 
\textbf{Case 1: $p>1/2$.} 
It follows from \eqref{eq:JB4} that $r_n \to 1$. Therefore, due to \eqref{eq:JB2}, 
\begin{equation}\label{eq:JB_ineq}
J_B^1(n , k_n, p) \sim \frac{(1-r_n) \log_2 n + p}{p n (1- r_n + 1/n)} = \frac{1}{pn} \left[ \log_2 n + \frac{p- \frac{\log_2 n}{n}}{1-r_n + \frac{1}{n}}\right] \geq \frac{\log_2 n}{pn},
\end{equation}
where the inequality holds for $n$ large enough. Thus the smallest rate we can hope for is $\frac{\log_2 n}{pn}$. By \eqref{eq:JB3}, this is (only) achieved by any sequence $(k_n)$ such that $r_n \to 1$ and $(1-r_n) \ln n \to +\infty$, that is, that converges to $1$ sufficiently slowly. We conclude that 
\[\min_{2 \leq k \leq n} J_B^1(n, k, p) \sim \frac{\log_2 n}{pn}.\]

\textbf{Case 2: $p < 1/2$}. Assume that $r_n$ is bounded away from $1$. Then if follows from \eqref{eq:JB3} that 
\begin{equation}
\label{eq:JB5}
\min_{2 \leq k \leq n} J_B^1(n, k, p) \sim \frac{2 \log_2 n}{n}
\end{equation}
and that this is achieved if and only if $r_n \to 0$. But $r_n$ is indeed bounded away from $1$. Otherwise, a subsequence of $(r_n)$ would converge to $1$. The proof for the case $p>1/2$ shows that, along this subsequence, the best rate that could be obtained would be $\frac{\log_2 n}{pn}$. Since we now assume $p < 1/2$, this is larger than $\frac{2 \log_2 n}{n}$, hence suboptimal. This completes the proof of \eqref{eq:JB5}. The proof also shows that $r_n \to 0$. 

\textbf{Case 3:} $p = 1/2$. Due to \eqref{eq:JB3}, any sequence bounded away from $1$ leads to the rate $\frac{2 \log_2 n}{n}$. Moreover, previous arguments show that the best one can get from a sequence or subsequence going to $1$ is the same rate. Thus \eqref{eq:JB5} holds as well, and up to second order terms, all sequences $(r_n)$ except those converging very quickly to $1$ lead to the asymptotic behaviour. Note that \eqref{eq:JB5} also follows from Proposition \ref{cats_p_one_half}.

\subsubsection{Minimizers}

\textbf{Case 1: $p > \frac{1}{2}$}

        Consider the sequence of arg minimums when treating $r$ as a variable taking values on $[\frac{2}{n},1]$ (in which case, it is labeled $x$)
        \begin{align}
            x_n^*=\arg\min_{x\in [\frac{2}{n},1]}\, J_B^1(n,xn,p).
        \end{align}

        Consider the partial derivative of $f_n(x,p)=J(n,nx,p)$ with respect to $x$:
        \begin{align}\label{eq:deriv_f}
        \frac{\partial f_n(x,p)}{\partial x}=2\frac{A_p(x)n+B_p\ln n+C_p(x)+D_p(x)n\ln n}{(x(2p-1)+1)^2(n(1-x)+1)^2\ln 2},
        \end{align}
        where
        \begin{gather}
        \begin{split}
            A_p(x) =-1+2x(1-p)+(2p-1)x^2+p\ln 2-px^2\ln 2+2p^2x^2\ln 2-p\ln (px)
            \\-p(2p-1)x^2\ln (px)
            +((x-1)^2+2p^2x^2+p(x-1)(1-3x))\ln(1+x(p-1)),
        \end{split}
            \\
            B_p =-2p,\\
            C_p(x) =-1+x-2px+p\ln 2-p\ln(px)-p\ln(1+x(p-1)),\\
            D_p(x) =(1-2p)(x-1)^2.
        \end{gather}

        First, we prove that the discrete and continuous minimizers are close $|x_n^*-r_n^*|\leq \frac{1}{n}$. Using similar arguments as in Appendix~\ref{app:minimum}, $x_n^*\rightarrow 1$. Let $g_n$ denote the numerator of \ref{eq:deriv_f}. Since $A_p'(1)>0$ and $D_p'(x)>0$ on $[0,1)$ and $C_p'(x)$ is bounded on a neighborhood of $1$, it follows that there is an $\varepsilon>0$ such that on $x\in [1-\varepsilon,1]$, $A_p'(x)>a>0$ and $C_p'(x)>b$. So, we can choose $N$, depending on only on $a$ and $b$, such that $\forall n\geq N$ on $x\in [1-\varepsilon,1]$:

        \begin{align*}
            g_n'(x)=A_p'(x)n+C_p'(x)+D_p'(x)n\ln n\geq A_p'(x)n+C_p'(x)\geq an+b>0
        \end{align*}
        
        So, $g_n'(x)>0$ on $[1-\varepsilon,1]$ for all $n$ sufficiently large. Eventually, for large enough $n$, $r_n^*,x_n^*\in [1-\varepsilon,1]$. $g_n$ is thus initially negative, then zero at $x=x_n^*$, then positive. So, $x\rightarrow f_n(x,p)$ strictly decreases until $x=x_n^*$, then strictly increases (on $[1-\varepsilon,1]$). For $n\geq N$ and such that $\frac{3}{n}<\varepsilon$, it follows that $r_n^*$ must be the multiple of $1/n$ that comes immediately after or before $x_n^*$ (or both). So, $|x_n^*-r_n^*|<\frac{1}{n}$.
        
        Furthermore,
        \begin{align}\label{arg_min_condition}
            \frac{\partial f_n(x,p)}{\partial x}(x_n^*)=0\Longleftrightarrow
            0=A_p(x_n^*)n+B_p\ln n+C_p(x_n^*)+D_p(x_n^*)n\ln n.
        \end{align}

        Observe that $A_p(1)=2p^2\ln 2\neq 0$, hence in the limit with respect to $n$:

        \begin{align}
            D_p(x_n^*)n\ln n\sim -A_p(x_n^*)n,\nonumber\\
            D_p(x_n^*)\ln n\sim -(2p^2\ln 2),\nonumber\\
            (2p-1)(1-x_n^*)^2\sim \frac{2p^2}{\log_2 n},\nonumber\\
            1-x_n^*\sim \frac{p\sqrt{2}}{\sqrt{(2p-1)\log_2 n}}.
        \end{align}



        

        Using $|x_n^*-r_n^*|\leq \frac{1}{n}$, the above result, and the Squeeze Theorem, the proof is complete.
        
\textbf{Case 2: $p< \frac{1}{2}$}

For all $x\in [0,\frac{1}{2}]$, $nD_p(x)-B_p\geq \frac{n(1-2p)}{4}-2p>0$ for $n$ sufficiently large. Moreover, $A_p(x),C_p(x)\rightarrow\infty$ as $x\rightarrow 0^+$. Hence by \ref{eq:deriv_f}, there is a $\varepsilon>0$ such that for all sufficiently large $n$, $x\rightarrow f_n(x,p)$ is strictly increasing on $(0,\varepsilon]\supseteq [\frac{2}{n},\varepsilon]$. From Appendix~\ref{app:minimum}, $r_n^* \to 0$ for $p<\frac{1}{2}$. Eventually then, for all $n$ large enough, $r_n^*\in [\frac{2}{n},\varepsilon]$. But because $x\rightarrow f_n(x,p)$ is strictly increasing within this interval, $r_n^*=\frac{2}{n}$, or equivalently $k_n^*=2$, for all sufficiently large $n$.

\textbf{Case 3: $p=\frac{1}{2}$}

This follows from Proposition \ref{cats_p_one_half}

\subsection{Proof of Proposition~\ref{cats_p_one_half}}\label{cats_p_one_half_proof}

\begin{proof}
\begin{multline*}
\frac{\partial J_B^1}{\partial k}\left(n, k, \frac{1}{2}\right) = \\ 
\frac{-(n+1)\ln k + (n-1)\ln(2n-k) - 2(n+1-k) + (n + 3)\ln 2}{n (n + 1 - k)^2 \ln 2} \\
= \frac{n \ln \left( \frac{2n}{k} - 1 \right) - \ln(k(2n-k)) - 2(n+1-k) + (n + 3)\ln 2}{n (n + 1 - k)^2 \ln 2}.
\end{multline*}
We then note that,  for all $2 \le k \le n$,
$$
n \ln \left( \frac{2n}{k} - 1 \right) = 
2 \left( n-k + \frac{(n-k)^3}{3n^2} + \frac{(n-k)^5}{5n^4} + \dots \right) \ge 2(n-k)
$$
and $\ln(k(2n-k)) \le 2 \ln n$. Hence
$$
\frac{\partial J_B^1}{\partial k}\left(n, k, \frac{1}{2}\right) \ge 
\frac{(n + 3)\ln 2 - 2(\ln n + 1)}{n (n + 1 - k)^2 \ln 2},
$$
which is positive for all $n \ge 4$. Therefore $J_B^1(n, k, \frac{1}{2})$ increases with $k$, which implies it is minimal when $k = 2$.
\end{proof}

\subsection{Proof of Proposition~\ref{asymptotic}}\label{asym_JB_p}

    \begin{proof}
    We first reparameterize $J_B^1$ as
        \[J_B^1(n, rn, p) = \frac{2}{n} \times \frac{(1-r)\log_2 n + (1- r + pr) \log_2 (1 - r + pr) - pr \log_2 pr + pr}{(1 + (2p- 1)r)( 1- r + 1/n)}.\]

    \subsubsection*{As $p\rightarrow\infty$}

    Let $n$ be fixed throughout. Firstly note that for fixed $n$ and $k$,  as $p\to \infty$, 
\[pr\log_2(1-r+pr)-pr\log_2(pr) = 
    \frac{pr}{\ln 2} \ln \left(1+\frac{1-r}{pr}\right) \sim \frac{pr}{\ln 2} \times \frac{1-r}{pr} = \frac{1 -r}{\ln 2} 
    \]

    Thus, the leading term is $2pr$ in the numerator and $2pr(n-nr+1)$ in the denominator. Therefore, $(p\to J_B^1(n,k,p))\sim \frac{1}{n-nr+1}$. For fixed $n$ and large enough $p$, $r^*=\frac{2}{n}$ (a constant with respect to $p$).
    This may be seen by using \eqref{eq:deriv_f} and noticing that the leading term in $p$ in the numerator is $4p^2x^2\ln(1+x(p-1))n$ for any $x\in \left[\frac{2}{n},1\right]$. It follows that there exists $P$ such that, for any $p>P$, $f_n$ is increasing in $x$ over this interval, hence so is $J_B^1$ in terms of $r\in\{\frac{2}{n},\frac{3}{n},\dots,\frac{n}{n}\}$. Therefore, $J_B^1(n,k^*,p)\sim \frac{1}{n-1}$.

    \subsubsection*{As $p\rightarrow 0$}

    Let $n$ be fixed. We first show that for $p$ small enough, 
    we have $k^*=n-1$. Notice that:

    \[\lim_{p\rightarrow 0^+} J_B^1(n,k,p)=\frac{2\log_2(n-k)}{n-k+1}\equiv J_B^1(n,k,0)\]

    for $k\in \{2,\dots,n-1\}$ and for $k=n$, $\lim_{p\rightarrow 0^+} J_B^1(n,k,p)=1$. We find the critical points of $J_B^1(n,k,0)$ at fixed $n$, which becomes

    \[2\left[\frac{\frac{1}{(n-k)\ln 2}}{n-k+1}\cdot (-1)+\frac{\log_2(n-k)}{(n-k+1)^2}\right]=0\]

    \[\Leftrightarrow n-k+1=(n-k)\ln (n-k)\Leftrightarrow 1=(\ln (n-k)-1)e^{\ln(n-k)}\]

    \[\Leftrightarrow k=n-e^{W_0(e^{-1})+1}\]
    where $W_0$ is the real, increasing branch of the Lambert W Function. This is on the interior of $[2,n-1]$ for $n\geq 6$, is a unique critical point, and is the maximum of $J_B^1(n,k,0)$ for fixed $n$. Thus, the global minimum must be on $2$ or $n-1$. By direct computation $0=J_B^1(n,n-1,0)<J_B^1(n,2,0)$, proving $k^*=n-1$.
    
    So,
    \[J_B^1(n,k^*,p)=\log_2 n+(1+p(n-1))\log_2(\frac{1}{n}+p\frac{n-1}{n})-p(n-1)\log_2(p\frac{n-1}{n})+p(n-1).\]

    As $x\rightarrow 0^+$, terms of the forms $t(x)=x\ln (1+x),t(x)=\ln (1+x),t(x)=x$ decay much faster to $0$ compared to $x\ln x$ (meaning $\lim_{x\rightarrow 0^+}\frac{t(x)}{x\ln x}=0$), so asymptotically:

    \[J_B^1(n,k^*,p)\sim p(1-n)\log_2(p).\]
        
    \end{proof}

\subsection{Proof of Proposition \ref{R3_R4_boundary}}\label{R3_R4_boundary_proof}

\subsubsection*{$R_3$-$R_4$ boundary}

The approximate condition to be on the boundary is $J_B^1(n,2,p)=J_B^1(n,n-1,p)$ (approximate, as $n$ is assumed to be continuous), which is equivalent to
\begin{multline*}
\frac{2[(2p+n-2)\log_2 (2p+n-2) -2p\log_2(p)]}{(4p+n-2)(n-1)} \\
=\frac{[(n-1)p+((n-1)p+1)\log_2((n-1)p+1) -(n-1)p\log_2 (n-1)p]}{2(n-1)p+1}.
\end{multline*}

Note that one key fact about the boundary is that as $n\rightarrow\infty$, $p\rightarrow 0^+$. Notice that we have the following possible options for the asymptotic behavior of $(n,p)$:

\begin{enumerate}
    \item As $n\rightarrow+\infty$, $p$ is bounded below
    \item As $p\rightarrow 0^+$, $n$ is bounded above
    \item As $n\rightarrow+\infty$, $p\rightarrow 0^+$
\end{enumerate}

The first and second option are precluded by Proposition \ref{largebrooms} and Proposition \ref{asymptotic}. Now then:

\begin{gather*}
\frac{2\log_2 n}{n}
=\frac{[np+(np+1)\log_2(np+1) -np\log_2 np]}{2np+1} \\
\Leftrightarrow\quad
\frac{2(2np+1)\log_2 n}{n}-np-(np+1)\log_2(np+1) +np\log_2 np=0.
\end{gather*}

Since $np\sim np+p\log_2 n$, we can further simplify:
$$
-\frac{2\log_2 n}{n}+np+(np+1)\log_2(np+1) -np\log_2 np=0,
$$
and then
\begin{equation}
\label{eq:int_bound}
    np\log_2(np+1) -np\log_2 np
    =-\log_2(np+1)+\frac{2\log_2 n}{n}-np.
\end{equation}

Notice the left-hand side is always positive, meaning
\begin{align*}
&-\log_2(np+1)+\frac{2\log_2 n}{n}-np>0 \\
\Leftrightarrow&\quad
n^{2/n}>(np+1)e^{np\ln 2} \\
\Leftrightarrow&\quad
n^{2/n}\ln 4>(np\ln 2+\ln 2)e^{np\ln 2+\ln 2} \\
\Leftrightarrow&\quad
W_0(n^{2/n}\ln 4)>np\ln 2+\ln 2 \\
\Leftrightarrow&\quad
\frac{W_0(e^{2\ln n/n}\ln 4)-\ln 2}{n\ln 2}>p,
\end{align*}
where $W_0$ denotes the real, increasing branch of the Lambert W function.

Note that as $n\rightarrow\infty$, $W_0(e^{2\ln n/n}\ln 4)\rightarrow W_0(\ln 4)$. Additionally, a point on the graph of $xe^x=y$ is $(\ln 2,\ln 4)$, meaning $W_0(\ln 4)=\ln 2$. Finally, $W_0'(x)=\frac{W_0(x)}{x(W_0(x)+1)}$, so $W_0'(\ln 4)=\frac{\ln 2}{\ln 4(\ln 2+1)}=\frac{1}{2(\ln 2+1)}$. Using the first order approximation of $W_0(x)$ centered at $\ln 4$ and $e^x$ centered at $0$ ($\ln 2+\frac{1}{2(\ln 2+1)}(x-\ln 4)$ and $1+x$, respectively), we obtain the asymptotic result:
\begin{gather*}
\frac{\ln 2+\frac{1}{2(\ln 2+1)}((1+2\ln n/n)\ln 4-\ln 4)-\ln 2}{n\ln 2}>p \\
\Leftrightarrow\quad
\frac{\frac{1}{(\ln 2+1)}(2\ln n/n)}{n}>p 
\quad\Leftrightarrow\quad
\frac{2\ln n}{(\ln 2+1)n^2}>p.
\end{gather*}

Note further that asymptotically $\frac{2\ln n}{(\ln 2+1)n}>np$, so namely $np\rightarrow 0$. Furthermore, in the limit $np\rightarrow 0$, $np\log_2(np)$ goes to $0$ much slower than any other $np$ term in \ref{eq:int_bound}, so we reduce \ref{eq:int_bound} to:
\begin{gather*}
np\log_2 np=-\frac{2\log_2 n}{n} 
\quad\Leftrightarrow\quad
(np)^{np}=n^{-\frac{2}{n}}.
\end{gather*}

Notice the above becomes:
\begin{gather*}
\ln(np)e^{\ln (np)}=-\frac{2}{n}\ln n
\quad\Leftrightarrow\quad
\ln (np)=W_{-1}(-\frac{2}{n}\ln n)
\quad\Leftrightarrow\quad
p=\frac{-\frac{2}{n}\ln n}{nW_{-1}(-\frac{2}{n}\ln n)},
\end{gather*}
where $W_{-1}(x)$ is the decreasing, real branch of the Lambert W function. As $x\rightarrow 0^-$, $W_{-1}(x)\rightarrow-\infty$ and $W_{-1}(x) \sim \ln(-x)$, so asymptotically we obtain
$$
			     p \stackrel{n\to\infty}{\sim}  \frac{-\frac{2}{n}\ln n}{n\ln(\frac{2}{n}\ln n)} 
			=\frac{-\frac{2}{n}\ln n}{n(\ln 2-\ln n+\ln\ln n)}
			\stackrel{n\to\infty}{\sim} \frac{2}{n^2}.
$$

\subsubsection*{$R_1$-$R_2$ boundary}

A similar analysis to that presented above as $p \to \infty$ and $n \to \infty$ shows that the $R_1$-$R_2$ boundary approaches $p = n^2 \log_2 n / 3$.

\subsection{Proof of Proposition~\ref{min_J1_in_p}}\label{min_J1_in_p_proof}

\begin{proof}
    Since $J_B^1$ is a continuous function of $p$, $\min_{2 \le k \le n} J^1_B(n,k,p)$ is also a continuous function of $p$, so it suffices to prove the result separately for each region of $n$-$p$ space.

    \subsubsection*{Regions $R_1$ and $R_3$}
    
    By Proposition~\ref{broom_J1_in_p}, the result holds in the regions $R_1$ and $R_3$ where $k^* = 2$. 
    
    \subsubsection*{Region $R_2$}
    
    If $p > \frac{1}{2}$ then $\theta(k, p) < 4/k + k/2 < k$ for all $k \ge 3$. We then have $n \ge k > \theta(k, p)$ for all $n \ge 3$, $k \in \{3, \dots, n \}$ and $p > \frac{1}{2}$, which, by Proposition~\ref{cats_p_one_half}, includes the region $R_2$. Therefore, by Proposition~\ref{broom_J1_in_p}, $J^1_B(n,k,p)$ is a strictly decreasing function of $p$ for all $(n,p) \in R_2$ and all $k \in \{2, \dots, n-1 \}$. It follows that in $R_2$, for any $\epsilon > 0$,
    \begin{align*}
    \min_{2 \le k \le n} J^1_B(n,k,p + \epsilon) &= \min_{3 \le k \le n-2} J^1_B(n,k,p + \epsilon) \\
    &= \min \{ J^1_B(3,k,p + \epsilon), \dots, J^1_B(n,n-2,p + \epsilon) \} \\
    &< \min \{ J^1_B(3,k,p), \dots, J^1_B(n,n-2,p) \} \\
    &= \min_{3 \le k \le n-2} J^1_B(n,k,p) \\
    &= \min_{2 \le k \le n-1} J^1_B(n,k,p).
    \end{align*}
    Hence $\min_{2 \le k \le n} J^1_B(n,k,p)$ is a strictly decreasing function of $p$ for all $(n,p) \in R_2$. Lastly, we have $\theta(2, p) = 1/p + 2(1-p) \le 3$ for all $p \ge \frac{1}{2}$, which implies $n > \theta(2, p)$ for all $(n,p) \in R_2$, as required.
    
    \subsubsection*{Region $R_4$}
    
    First we note that 
    \begin{equation}\label{equivaleneces}
    n < \theta(n - 1, p) \Leftrightarrow p < \frac{1}{n-1} \Leftrightarrow n < \frac{p+1}{p}.
    \end{equation}
    Since $J^1_B(n, n-1, 1/(n-1)) = 1$, we cannot have $k^* = n - 1$ along the curve $p =1/(n-1)$, so the curve must lie entirely above $R_4$. Therefore $n < \theta(n - 1, p)$ by Equation~\ref{equivaleneces}, and Proposition~\ref{broom_J1_in_p} then implies that $J^1_B(n,n-1,p)$ is a strictly increasing function of $p$ for all $(n, p) \in R_4$. 
    By Equation~\ref{equivaleneces}, $n < \theta(n - 1, p)$ also implies 
    $$
    n < \frac{p+1}{p} < \frac{1}{p} + 2(1-p) = \theta(2, p)
    $$ 
    for all $p < 1$, which, by Proposition~\ref{cats_p_one_half}, includes all $p$ values in $R_4$. Hence, for all $(n,p) \in R_4$, we have shown first that $\min_{2 \le k \le n} J^1_B(n,k,p)$ is a strictly increasing function of $p$ and second that $n < \theta(2, p)$, which proves the result for $R_4$.
\end{proof}

\subsection{Proof of Corollary~\ref{max_min_J1B}}\label{max_min_J1B_proof}

\begin{proof}
    In the proof of Proposition~\ref{min_J1_in_p}, we showed that $n > \theta(2, p)$ for all $n > 3$ and $p > \frac{1}{2}$, which includes all of $R_1$ and $R_2$. We also showed that $n < \theta(2, p)$ for all $(n,p) \in R_4$. Therefore $n = \theta(2, p)$ must hold only in $R_3$, where $k^* = 2$. Hence when $\min_{2 \le k \le n} J^1_B(n,k,p)$ is maximal, $k^* = 2$ and $n = \theta(2, p)$, which has the unique positive solution 
    $$
    p = \frac{1}{4} \left( \sqrt{n^2-4n+12} - n + 2 \right).
    $$
    Substituting this $p$ value into $J^1_B(n, 2, p)$ gives the required result.
\end{proof}

\newpage

\begin{table}[h]
	\centering
	\renewcommand{\arraystretch}{2.5}
	\begin{tabular}{|c|c|c|c|c|c|}
		\hline
  $n$ 
  & $\mathbb{H}_Y(I_S)$ 
  & $\mathbb{E}_Y(I_S)$ 
  & \thead{$\mathbb{E}_Y(J^1) =$ \\ $\dfrac{n\log_2n}{\mathbb{H}_Y(I_S)}$} 
  & \thead{$\mathbb{H}_Y(J^1) =$ \\
  $\dfrac{n\log_2n}{\mathbb{E}_Y(I_S)}$} 
  & \thead{$\mathcal{J}(n) = $ \\ 
   $\mathbb{E}_Y(J^1) - \mathbb{H}_Y(J^1)$}\\
		\hline 
  $2$ & $2$ & $2$ & 1 & 1 & 0 \\
		\hline 
  $3$ & $5$ & $5$ & 0.9509775 & 0.9509775 & 0 \\
		\hline 
  $4$ & $\dfrac{216}{25}$ & $\dfrac{26}{3}$ & $0.\overline{925}$ & $0.\overline{923076}$ & $0.\overline{002849}$ \\
		\hline 
  $5$ & $\dfrac{728}{57}$ & $\dfrac{77}{6}$ & 0.9089966 & 0.9046473 & 0.0043493 \\
		\hline 
  $6$ & $\dfrac{1162800}{67217}$ & $\dfrac{87}{5}$ & 0.8965605 & 0.8913664 & 0.0051941 \\
		\hline 
  $7$ & $\dfrac{199806750}{9017743}$ & $\dfrac{223}{10}$ & 0.8869172 & 0.8812325 & 0.0056847 \\
		\hline 
  $8$ & $\dfrac{14827566600}{543154091}$ & $\dfrac{962}{35}$ & 0.8791529 & 0.8731809 & 0.0059720 \\
		\hline 
  $9$ & $\dfrac{3276378490018233600}{100225924446507833}$ & $\dfrac{4609}{140}$ & 0.8727252 & 0.8665883 & 0.0061369 \\
		\hline 
  $10$ & $\dfrac{247828225931780785200}{6470294678155594501}$ & $\dfrac{4861}{126}$ & 0.8672884 & 0.8610634 & 0.0062249 \\
		\hline
  $11$ & $\dfrac{4355154127379809158517035000}{98723552636267773621908727}$ & $\dfrac{55991}{1260}$ & 0.8626104 & 0.8563470 & 0.0062634 \\
		\hline
	\end{tabular}
	\caption{Harmonic and arithmetic means of $I_S$ and $J^1$ under the Yule model, and corresponding values of the Jensen gap $\mathcal{J}(n)$. Non-repeating decimal numbers are rounded to seven decimal places. $H_Y(I_S)$ values are given as exact rational numbers to aid the search for a better general approximation or an exact formula.}
	\label{table_values}
\end{table}

\begin{table}[h]
	\centering
	\renewcommand{\arraystretch}{2.5}
	\begin{tabular}{|c|c|c|c|c|c|}
		\hline
  $n$ 
  & $\mathbb{H}_U(I_S)$ 
  & $\mathbb{E}_U(I_S)$ 
  & \thead{$\mathbb{E}_U(J^1) =$ \\ $\dfrac{n\log_2n}{\mathbb{H}_U(I_S)}$} 
  & \thead{$\mathbb{H}_U(J^1) =$ \\
  $\dfrac{n\log_2n}{\mathbb{E}_U(I_S)}$} 
  & \thead{$\mathcal{J}(n) = $ \\ 
   $\mathbb{E}_U(J^1) - \mathbb{H}_U(J^1)$}\\
		\hline 
  $2$ & $2$ & $2$ & 1 & 1 & 0 \\
		\hline 
  $3$ & $5$ & $5$ & 0.9509775 & 0.9509775 & 0 \\
		\hline 
  $4$ & $\dfrac{360}{41}$ & $\dfrac{44}{5}$ & $0.9\overline{1}$ & $0.\overline{90}$ & $0.0\overline{02}$ \\
		\hline 
  $5$ & $\dfrac{3822}{289}$ & $\dfrac{93}{7}$ & 0.8778614 & 0.8738439 & 0.0040174 \\
		\hline 
  $6$ & $\dfrac{4883760}{267509}$ & $\dfrac{386}{21}$ & 0.8495512 & 0.8437960 & 0.0057552 \\
		\hline 
  $7$ & $\dfrac{2051349300}{86119541}$ & $\dfrac{793}{33}$ & 0.8250067 & 0.8177793 & 0.0072274 \\
		\hline 
  $8$ & $\dfrac{11813334132600}{395454217009}$ & $\dfrac{12952}{429}$ & 0.8034058 & 0.7949351 & 0.0084707 \\
		\hline 
  $9$ & $\dfrac{2440219396211496900}{67072134682154831}$ & $\dfrac{26333}{715}$ & 0.7841601 & 0.7746351 & 0.0095250 \\
		\hline 
  $10$ & $\dfrac{3265803434049000144300}{75388088210344282097}$ & $\dfrac{106762}{2431}$ & 0.7668367 & 0.7564121 & 0.0104246 \\
		\hline
	\end{tabular}
	\caption{Harmonic and arithmetic means of $I_S$ and $J^1$ under the uniform model, and corresponding values of the Jensen gap $\mathcal{J}(n)$. Non-repeating decimal numbers are rounded to seven decimal places. $H_U(I_S)$ values are given as exact rational numbers to aid the search for a better general approximation or an exact formula.}
	\label{table_values_unif}
\end{table}
\end{appendices}
	
\end{document}